\spnewtheorem{notation}{Notation}{\bfseries}{\itshape}
\def\set#1{\{#1\}}
\def\tuple#1{\langle#1\rangle}
\def\Tuple#1{\left\langle#1\right\rangle}
\newcommand{\intset}[2]{\set{#1, \dots, #2}}
\def\gG{\mathcal G}
\def\defn#1{\textbf{#1}}
\def\resp#1{(resp. #1)}
\def\ie{i.e.,\xspace}
\def\quand{\quad\mbox{and}\quad}
\newcommand{\nclr}[1]{{\color{blue}#1}}
\newcommand{\pclr}[1]{{\color{red}#1}}
\newcommand{\af}[1]{{\tt f\!f_{#1}}}
\newcommand{\at}[1]{{\tt t\!t_{#1}}}
\newcommand{\av}[1]{{\tt v\!\!v_{#1}}}
\def\bipset{\mathcal B}
\newcommand{\bipof}[1][]{\mathcal F_{#1}}
\newcommand{\bipoft}[1][]{\mathcal F_{#1}^{\;\mixr}}
\def\encof#1{{\Gamma}_{#1}}
\def\synthc#1{\mathsf{S}_{#1}}
\def\AXrule{\mathsf{AX}}
\def\mixr{\mathsf{mix}}
\def\dD{\mathcal D}
\newcommandx\bip[3][3=]{%
  \pclr{\ifthenelse{\equal{#1}{}}{\lone}{#1}}
  \ltens_{#3} 
  \nclr{\ifthenelse{\equal{#2}{}}{\lone}{#2}}
}
\def\slbipof#1#2{\bipof{#1}^{#2}}
\NewDocumentCommand{\bbipof}{m m m o}{%
  \Tuple{
    \slbipof{#1}{#2,#3},
    \IfValueT{#4}{#4}{
      \cpd[#2]{#1}{#3}
    }
  }
}
\def\bn{\mathcal B}
\newcommand{\vof}[1][]{V_{#1}}
\newcommand{\eof}[1][]{\curvearrowright_{#1}}
\newcommand{\pof}[1][]{\mathcal{P}_{#1}}
\def\genbn{\tuple{\vof,\eof,\pof}}
\def\vTrue{\textsf{True}}
\def\vFalse{\textsf{False}}
\def\isT{=\vTrue}
\def\isF{=\vFalse}
\def\parof#1{\pi_{#1}}
\newcommand{\rv}[1][X]{\mathsf{#1}}
\newcommand{\rvs}[1][X]{\vec{\mathsf{#1}}}
\def\vX{{\rv[X]}}
\def\vY{{\rv[Y]}}
\def\vZ{{\rv[Z]}}
\def\vXs{{\rvs[X]}}
\def\vYs{{\rvs[Y]}}
\def\vZs{{\rvs[Z]}}
\def\vA{{\rv[A]}}
\def\vB{{\rv[B]}}
\def\vD{{\rv[D]}}
\def\vC{{\rv[C]}}
\def\vR{{\rv[R]}}
\def\vS{{\rv[S]}}
\def\vW{{\rv[W]}}
\def\vT{{\rv[T]}}
\newcommand{\rvSet}[1][X]{{\mathbf #1}}
\def\sX{{\rvSet[X]}}
\def\sY{{\rvSet[Y]}}
\NewDocumentCommand{\pr}{o}{%
  \mathsf{Pr}%
  \IfValueT{#1}{(#1)}%
}
\newcommand{\prof}[1]{\pr(#1)}
\newcommandx\prcond[3][3=]{\pr^{#3}(#1 \mid #2)}
\def\prodash{:\!\!-\;}
\NewDocumentCommand{\meth}{m m o}{%
    \IfValueT{#3}{#3 :: }%
    {\tt [#1] \prodash[#2] }%
}
\def\selmeth#1{
  \begin{bmatrix}#1\end{bmatrix}
}
\NewDocumentCommand{\state}{o m m }{
  \IfValueT{#1}{#1 :: \;  }%
  #2 \;;\; #3
}
\def\LO{\textsf{LO}\xspace}
\def\probLO{\textsf{probLO}\xspace}
\def\proLog{\textsf{Prolog}\xspace}
\def\Prolog{\proLog}
\def\ProbLog{\textsf{ProbLog}\xspace}
\def\LPAD{\textsf{LPAD}s\xspace}
\def\PRISM{\textsf{PRISM}\xspace}
\def\prog{\mathsf{P}}
\newcommand{\method}{{\tt M}}
\NewDocumentCommand{\methodC}{o o}{
    {\tt C}\IfValueT{#1}{_{#1}}\IfValueT{#2}{^{\tt #2}}
}
\NewDocumentCommand{\methodT}{o}{
    {\tt T}\IfValueT{#1}{_{#1}}
}
\def\metoff#1#2{\method_{\,#1}^{\,\tuple{#2}}}
\def\metof#1{\methodT_{#1}}
\def\metofx#1{\methodT_{#1}^\mixr}
\def\termr{\mathsf{term}}
\def\expr{\mathsf{exp}}
\def\dexpr{\mathsf{exp}^\mixr}
\def\branr{\mathsf{bra}}
\definecolor{cadetblue}{rgb}{0.082, 0.376, 0.509} % fixed syntax
\definecolor{lightergray}{rgb}{0.9, 0.9, 0.9} % fixed syntax
\newcommand{\probarray}[3]{%
  \tikz[remember picture,baseline = (pb#1.center)]\node (pb#1) {%
    \begingroup
    \rowcolors{2}{lightgray}{lightergray}
    \scriptsize{\begin{tabular}{|#2|}
    \hline
      \rowcolor{cadetblue}
      #3
    \\\hline
    \end{tabular}}
    \endgroup
  };
}
\newcommand{\vbn}[2]{%
	\tikz[remember picture,baseline = (bn#1.center)]\node[draw,circle,inner sep=1pt,font=\scriptsize] (bn#1) {\textsf{\begin{tabular}{c}#2\end{tabular}}};%
}
\def\XS{\mathsf{Sys}}
\def\LL{\mathsf{LL}}
\def\MALL{\mathsf{MALL}}
\def\MLL{\mathsf{MLL}}
\def\wmix{^{\mixr}}
\def\MALLx{\MALL\wmix}
\def\MLLx{\MLL\wmix}
\def\atomSet{\mathbb A}
\def\proves#1{\vdash_{#1}}
\DeclareFontFamily{U}{mathb}{}
\DeclareFontShape{U}{mathb}{m}{n}{<-> mathb10}{}
\DeclareSymbolFont{mathb}{U}{mathb}{m}{n}
\DeclareMathSymbol{\abxcurvearrowright}{\mathrel}{mathb}{"F1}
\def\curvearrowright{\abxcurvearrowright}
\begin{document}
%%%%%%%%%%%%%%%%%%%%%%%%%%%%%%%%%%%%%%%%%%%%%%%%%%%%%%%%%%%%%%%%
%%%%%%%%%%%%%%%%%%%%%%%%%%%%%%%%%%%%%%%%%%%%%%%%%%%%%%%%%%%%%%%%
%%%%%%%%%%%%%%%%%%%%%%%%%%%%%%%%%%%%%%%%%%%%%%%%%%%%%%%%%%%%%%%%
%%%%%%%%%%%%%%%%%%%%%%%%%%%%%%%%%%%%%%%%%%%%%%%%%%%%%%%%%%%%%%%%

\title{Probabilistic Linear Logic Programming with an Application to Bayesian Network Computations (Extended Version)}
\titlerunning{Probabilistic Linear Logic Programming}

\author{
    Matteo Acclavio\inst{1}\orcidID{0000-0002-0425-2825}
    \and
    Roberto Maieli\inst{2}\orcidID{0000-0001-9723-7183}%
    \thanks{Supported by the INdAM-GNSAGA Research Group.}
}

\authorrunning{M. Acclavio and R. Maieli}
% First names are abbreviated in the running head.
% If there are more than two authors, 'et al.' is used.
%
\institute{
    University of Southern Denmark 
    % \quad \& \quad University of Sussex
    \and
    Roma TRE University
% Department of Mathematics and Physics 
% \email{\tt maieli@uniroma3.it}\\
% \url{http://logica.uniroma3.it/maieli/} }
}

\maketitle

%%%%%%%%%%%%%%%%%%%%%%%%%%%%%%%%%%%%%%%%%%%%%%%%%%%%%%%%%%%%%%%%
%%%%%%%%%%%%%%%%%%%%%%%%%%%%%%%%%%%%%%%%%%%%%%%%%%%%%%%%%%%%%%%%
\begin{abstract}

    Bayesian networks are a canonical formalism for representing probabilistic dependencies, yet their integration within logic programming frameworks remains a nontrivial challenge, mainly due to the complex structure of these networks.

    In this paper, we propose \emph{probLO} ({\em probabilistic Linear Objects}) an extension of Andreoli and Pareschi's LO language which embeds Bayesian network representation and computation within the framework of multiplicative additive linear logic programming.
    The key novelty is the use of multi-head Prolog-like methods to reconstruct network structures, which are not necessarily trees, and the operation of \emph{slicing}, standard in the literature of linear logic, enabling internal numerical probability computations without relying on external semantic interpretation.

    \keywords{Bayesian Networks \and Focusing Proofs \and Linear Logic \and Logic Programming  \and Probabilities \and Sequent Calculus.}
\end{abstract}
%%%%%%%%%%%%%%%%%%%%%%%%%%%%%%%%%%%%%%%%%%%%%%%%%%%%%%%%%%%%%%%%
%%%%%%%%%%%%%%%%%%%%%%%%%%%%%%%%%%%%%%%%%%%%%%%%%%%%%%%%%%%%%%%%

%%%%%%%%%%%%%%%%%%%%%%%%%%%%%%%%%%%%%%%%%%%%%%%%%%%%%%%%%%%%%%%%
%%%%%%%%%%%%%%%%%%%%%%%%%%%%%%%%%%%%%%%%%%%%%%%%%%%%%%%%%%%%%%%%
%%%%%%%%%%%%%%%%%%%%%%%%%%%%%%%%%%%%%%%%%%%%%%%%%%%%%%%%%%%%%%%%
\section{Introduction}
%%%%%%%%%%%%%%%%%%%%%%%%%%%%%%%%%%%%%%%%%%%%%%%%%%%%%%%%%%%%%%%%
%%%%%%%%%%%%%%%%%%%%%%%%%%%%%%%%%%%%%%%%%%%%%%%%%%%%%%%%%%%%%%%%
%%%%%%%%%%%%%%%%%%%%%%%%%%%%%%%%%%%%%%%%%%%%%%%%%%%%%%%%%%%%%%%%

Probabilistic logic programming (PLP) has emerged as a key paradigm for integrating structured logical knowledge with uncertain information \cite{PbLP}. However, representing and reasoning about probabilistic dependencies in a Bayesian network \cite{Pearl1988,Jensen1996} within logic programming frameworks remains a nontrivial challenge due to the complex structure of these networks, where variables can have multiple parents nodes and child nodes.
In fact, most existing approaches in probabilistic logic programming \cite{PbLP,BLP,problog-2,DeRaedt2007ProbLog,SatoKameya1997PRISM,Vennekens2004LPAD}, relies on classical or Horn-clause logic as their logical framework, where each method typically has the head made of a single atom, thus suggesting a tree-like structure of dependencies.
Furthermore, the absences of cyclic dependencies in Bayesian networks suggests that, even if a variable can be influenced by multiple parent variables, its use in a computation should be linear, i.e., each variable should be computed once in a given query.

In this paper, we introduce the \emph{probabilistic Linear Objects language} (or \emph{\probLO}, for short) to embed Bayesian network representation and computation within the framework of logic programming.
The \probLO language is an extension of Andreoli and Pareschi's \emph{Linear Objects language} (or \LO)  \cite{andreoli-pareschi-1990,andreoli-pareschi-1991}, a logic programming language grounded in \emph{linear logic} \cite{Girard87}.
The operational semantics of \LO is derived from Andreoli's theory of \emph{focused proofs} \cite{andreoli1992logic}, and it refines the well-known correspondence between computation and proof search articulated in Miller's {\em Uniform Proofs} \cite{MillerUniformProofs}.

The choice of linear logic as the underlying logical framework is motivated by its inherent resource sensitivity, which aligns well with the intuition that we should resolve the conditional probability of each node of Bayesian network only once per computation.
Therefore, the specific selection of \LO becomes necessary because of the possibility of defining \emph{multi-head clauses} (or \emph{methods}) otherwise, it would become quite difficult to represent the complex dependencies of
Bayesian networks without resorting to redundancy or auxiliary constructs.
In \LO, each method can be interpreted as a \emph{bipole formula} \cite{andreoli1992logic} of linear logic, and its application in an execution corresponds to the application of a (derivable) \emph{synthetic rule} in a proof search, or equivalently,  to joining the {\em positive} (or synchronous) and {\em negative} (or asynchronous) phases together in the proof search that focuses on the bipole:
\begin{equation}\label{eq:LOmethIntro}
    \small
    \begin{array}{c}
    \\[-20pt]
        \overbrace{\meth{\underbrace{h_{X_1}, \ldots, h_{X_n}}_{multi-head}}{\underbrace{b_{Y_1}, \ldots, b_{Y_m}}_{body}}}^{\tt LO\;method}
        \;\rightsquigarrow\;
        \overbrace{
        \bip{
            (\underbrace{h_{\vX_1}^\perp\ltens \cdots \ltens h_{\vX_n}^\perp}_{\text{multi-head}})
        }{
            (\underbrace{b_{\vY_1}\lpar \cdots \lpar b_{\vY_m}}_{\text{body}})
        }}^{\tt bipole}
    \end{array}
\end{equation}

%%%%%%%%%%%%%%%%%%%%%%%%%%%%%%%%%%%%%%%%%%%%%%%%%%%%%%%%%%%%%%%%
% FIG : bayesian network running example
%%%%%%%%%%%%%%%%%%%%%%%%%%%%%%%%%%%%%%%%%%%%%%%%%%%%%%%%%%%%%%%%
\begin{figure}[t]
\centering
\adjustbox{max width=.8\textwidth}{$
    \begin{array}{r@{\qquad}cccc@{\qquad}l}
        \probarray{1}{c|c}{
            \color{white} C=\vTrue & \color{white} C=False \\
            \hline
            0.5 & 0.5
        }
    &&
        \vbn1{Cloudy} 
    \\
        \probarray{2}{c|c|c}{
            \color{white} C & \color{white} S=True & \color{white} S=False \\
            \hline
            True & 0.1 & 0.9  \\
            \hline
            False & 0.5 & 0.5 
        }
    &
        \vbn2{Sprinklers} &&\vbn3{Rain} 
    &&
        \probarray{3}{c|c|c}{
            \color{white} C & \color{white} R=True & \color{white} R=False \\
            \hline
            True & 0.8 & 0.2  \\
            \hline
            False & 0.2 & 0.8 
        }
    \\
        \probarray{4}{c|c|c|c}{
            \color{white} R & \color{white} S & \color{white} W=True & \color{white} W=False \\
            \hline
            True & True & 0.99 & 0.01  \\
            \hline
            True & False & 0.9 & 0.1  \\
            \hline
            False & True & 0.9 & 0.1  \\
            \hline
            False & False & 0 & 1 
        }
    &&
        \vbn4{Wet\\Grass} && \vbn5{Traffic\\Jam}
    &
        \probarray{5}{c|c|c}{
            \color{white} R & \color{white} T=True & \color{white} T=False \\
            \hline
            True & 0.7 & 0.3  \\
            \hline
            False & 0.9 & 0.1
        }
    \end{array}
    \Dedges{bn1/bn2,bn1/bn3,bn2/bn4,bn3/bn4,bn3/bn5}
    \dDedges{bn1/pb1,bn2/pb2,bn3/pb3,bn4/pb4,bn5/pb5}
    $
    }
    \caption{
    A Bayesian Network representing the set of random variables (Cloudy ($\vC$), Sprinklers ($\vS$), Rain ($\vR$), WetGrass ($\vW$), and TrafficJam ($\vT$)).
    %  and their conditional dependencies.
    }
    \label{fig:BN1}
\end{figure}
%%%%%%%%%%%%%%%%%%%%%%%%%%%%%%%%%%%%%%%%%%%%%%%%%%%%%%%%%%%%%%%%
%%%%%%%%%%%%%%%%%%%%%%%%%%%%%%%%%%%%%%%%%%%%%%%%%%%%%%%%%%%%%%%%
%
To understand why the possibility of multi-head clauses is crucial to represent Bayesian networks, let us briefly recall some basic notions about them.
A Bayesian network is a graphical model that represents a set of random variables and their conditional dependencies via a directed acyclic graph, where each node represents a random variable, and the directed edges represent the conditional dependencies between these variables.
See \Cref{fig:BN1} for an example of a simple Bayesian network.
Each row of the \emph{conditional probability table} associated to a node specifies the \emph{conditional probability} $\prcond{\vX}{\vY_1,\ldots, \vY_m}$ that the variable $\vX$ takes a specific value, given some \emph{evidences}, i.e., given that the parent variables $\vY_1,\ldots, \vY_m$ take specific values.
In the general case, conditional probabilities for 
random variables $\vX_1,\ldots, \vX_n$, given evidences random variables $\vY_1,\ldots, \vY_m$,
are defined by the following law, known as Bayes' Theorem,
\begin{equation}\label{eq:Bayes_Theorem}
\prcond{\vX_1,\ldots, \vX_n}{\vY_1,\ldots, \vY_m}
= 
\frac{\prof{\vX_1,\ldots,\vX_n,\vY_1,\ldots,\vY_m}}{\prof{\vY_1,\ldots,\vY_m}}
\end{equation}
where 
$\prof{\vX_1,\ldots, \vX_n,\vY_1,\ldots,\vY_m}$ is the \emph{joint probability} of $\vX_1,...,\vX_n,\vY_1,...,\vY_m$,
and  
$\prof{\vY_1,\ldots,\vY_m}$ is the \emph{marginal probability} that  $\vY_1,\ldots,\vY_m$ happen, regardless of the values of $\vX_1,\ldots,\vX_n$.

Our driving intuition on the need of multi-head methods to represent conditional probabilities is that, as in conditional probabilities, we can interpret the outcomes $\vX_1,\ldots,\vX_n$ as being \emph{produced} simultaneously given the evidences $\vY_1,\ldots,\vY_m$; thus, we can represent their computation in a logic programming framework by means of  \emph{multi-head methods} having $\vX_1,\ldots,\vX_n$ as head and $\vY_1$, \ldots, $\vY_m$ as body.
In \probLO, we can interpret conditional probabilities as multi-head methods annotated with a probability value $p\in[0,1]$, using a syntax inspired by the one adopted in \ProbLog \cite{problog-1,problog-2} as shown in \Cref{eq:pmethIntro}.
\begin{equation}\label{eq:pmethIntro}
    \small
    \underbrace{\prcond{\vX_1,\ldots, \vX_n}{\vY_1,\ldots, \vY_m}=p}_{\text{conditional probability}}
    \quad\rightsquigarrow\!\!\!\!
    \meth{\underbrace{X_1, \ldots, X_n}_{\text{multi-head}}}{\underbrace{Y_1, \ldots, Y_m}_{body}}[\underbrace{p}_{\text{probability}}]
\end{equation}

The effect of the probabilistic annotation is to modify the operational semantics of the method application  in such a way that, when the method is applied during a proof search ({\em expansion} phase), the probability of the conclusion $q\cdot p$ is obtained by multiplying the probability of the premises $q$ by the probability $p$ of the method, as shown in \Cref{eq:pmethIntro2}.
\begin{equation}\label{eq:pmethIntro2}
    \small
    \vlinf{\expr}{}{
        \state[q\cdot p]{\prog,\methodT}{\Gamma,h_1,\ldots, h_n}
    }{
        \state[q]{\prog}{\Gamma,b_1,\ldots, b_{m}}
    }
    \qquad 
    \text{ where }
    \meth{h_1,\ldots, h_n}{b_1,\ldots, b_{m}}[p] \in \methodT
\end{equation}

As a first application of \probLO, we show how to represent Bayesian networks, and 
how to compute joint and marginal probabilities directly within the logic programming framework without relying on an external semantic interpretation. 
We show how derivations in \probLO naturally mimic the process of computing over a Bayesian network where probabilities, 
annotating the current state, are updated according to rules of the operational semantics, 
each corresponding to the application of a conditional probability defined in the network (see \Cref{fig:computationBN}).

%%%%%%%%%%%%%%%%%%%%%%%%%%%%%%%%%%%%%%%%%%%%%%%%%%%%%%%%%%%%%%%%
% FIG : bayesian network running example
%%%%%%%%%%%%%%%%%%%%%%%%%%%%%%%%%%%%%%%%%%%%%%%%%%%%%%%%%%%%%%%%
\begin{figure}[t]
\centering
    \adjustbox{max width=\textwidth}{
    $
    \prof{\vC=\vTrue,\vR=\vTrue,\vW=\vFalse,\vT=\vFalse}= 
    \underbrace{\prcond{\vT=\vFalse}{\vR=\vTrue}}_{{0.3}}\times
    \left\{
    \begin{array}{c}
        \overbrace{\prcond{\vW=\vFalse}{\vR=\vTrue,\vS=\vFalse}}^{{0.1}} 
        \times 
        \overbrace{\prcond{\vS=\vFalse}{\vC=\vTrue}}^{{0.9}}
        \times
        \overbrace{\prcond{\vR=\vTrue}{\vC=\vTrue}}^{{0.8}}
        \times
        \overbrace{\prof{\vC=\vTrue}}^{{0.5}} %&\quad = 0.036
        \\
        + %&=0.0364
        \\
        \underbrace{\prcond{\vW=\vFalse}{\vR=\vTrue,\vS=\vTrue}}_{{0.01}} 
        \times 
        \underbrace{\prcond{\vS=\vTrue}{\vC=\vTrue}}_{{0.1}}
        \times
        \underbrace{\prcond{\vR=\vTrue}{\vC=\vTrue}}_{{0.8}}
        \times
        \underbrace{\prof{\vC=\vTrue}}_{{0.5}} %&\quad = 0.0004
    \end{array}
    \right\}
    $
    }

    \adjustbox{max width=.95\textwidth}{$
    \vlderivation{
        \vlin{\metof{\vT}}{}{
            \state[0.01092={\color{red}0.3}\times 0.0364]{
                \metof{\vC},
                \metof{\vR},
                \metof{\vS},
                \metof{\vW},
                \metof{\vT}
            }{
                {\color{red}\af\vT}, \at\vC, \at\vR, \af\vW, \at\vS\lwith\af\vS
            }
        }{
            \vliin{\branr}{}{
                \state[0.036 + 0.0004]{
                    \metof{\vC},
                    \metof{\vR},
                    \metof{\vS},
                    \metof{\vW}
                }{
                    {\at\vR}, \at\vC, \at\vR, \af\vW, 
                    {\color{red}\at\vS\lwith\af\vS}
                }
            }{
                \vlin{\metof{\vW}}{}{
                    \state[{{\color{red}0.1} \times 0.36}]{
                        \metof{\vC},
                        \metof{\vR},
                        \metof{\vS},
                        \metof{\vW}
                    }{
                        \at\vC, \at\vR,  \at\vR, {{\color{red}\af\vW}}, \af\vS
                    }
                }{
                    \vlin{\metof{\vS}}{}{
                        \state[{\color{red}{0.9}}\times 0.4]{
                            \metof{\vC},
                            \metof{\vR},
                            \metof{\vS}
                        }{
                            \at\vC, \at\vR, \at\vR,{\at\vR},{\color{red}{\af\vS},\af\vS}
                        }
                    }{
                        \vlin{\metof{\vR}}{}{
                            \state[{\color{red} 0.8}\times 0.5]{
                                \metof{\vC},
                                \metof{\vR}
                            }{
                                \at\vC, {\color{red}\at\vR,\at\vR,\at\vR},\at\vC
                            }
                        }{
                            \vlin{\metof{\vC}}{}{
                                \state[{\color{red}0.5} ]{
                                    \metof{\vC}
                                }{
                                    {\color{red}\at\vC, \at\vC, \at\vC}
                                }
                            }{
                                \vlhy{}
                            }
                        }
                    }
                }
            }{
                \vlin{\metof{\vW}}{}{
                    \state[{\color{red}0.01 }\times 0.36]{
                        \metof{\vC},
                        \metof{\vR},
                        \metof{\vS},
                        \metof{\vW}
                    }{
                        \at\vC, \at\vR,  \at\vR, {\color{red}{\af\vW}}, \at\vS
                    }
                }{
                    \vlin{\metof{\vS}}{}{
                        \state[{\color{red}{0.1}}\times 0.4]{
                            \metof{\vC},
                            \metof{\vR},
                            \metof{\vS}
                        }{
                            \at\vC, \at\vR, \at\vR,{\at\vR},{\color{red}{\at\vS},\at\vS}
                        }
                    }{
                        \vlin{\metof{\vR}}{}{
                            \state[{\color{red} 0.8}\times 0.5]{
                                \metof{\vC},
                                \metof{\vR}
                            }{
                                \at\vC, {\color{red}\at\vR,\at\vR,\at\vR},\at\vC
                            }
                        }{
                            \vlin{\metof{\vC}}{}{
                                \state[{\color{red}0.5} ]{
                                    \metof{\vC}
                                }{
                                    {\color{red}\at\vC, \at\vC, \at\vC}
                                }
                            }{
                                \vlhy{}
                            }
                        }
                    }
                }
            }
        }
    }
    $}
    \caption{
        The arithmetic expression of a possible way to compute the marginal probability $\prof{\vC=\vTrue,\vR=\vTrue,\vW=\vFalse,\vT=\vFalse}$, 
        and the corresponding \probLO derivation.
        In the derivation, each application of a method $\metof{\vX}$ corresponds to the multiplication of the probability of the premises 
        by the probability in the row of the conditional probability table of $\vX$ corresponding to the values of the evidences in the premises;
        the branching step corresponds to the sum in the arithmetic expression above.
    }
    \label{fig:computationBN}
\end{figure}
%%%%%%%%%%%%%%%%%%%%%%%%%%%%%%%%%%%%%%%%%%%%%%%%%%%%%%%%%%%%%%%%
%%%%%%%%%%%%%%%%%%%%%%%%%%%%%%%%%%%%%%%%%%%%%%%%%%%%%%%%%%%%%%%%
%
\paragraph{Contributions of the paper.} 
In this paper, we show (Theorem \ref{thm:dagLO}) how it is possible to characterize the property of a directed graph being acyclic within the framework of (purely multiplicative) \LO.
Then we introduce \probLO, an extension of \LO with probabilistic reasoning capabilities, and we show how to represent Bayesian networks in \probLO using multi-head probabilistic methods.
Finally, we show (Theorem \ref{thm:BNcomp}) how to compute joint and marginal probabilities directly within the logic programming framework of \probLO, without relying on an external semantic interpretation.

\paragraph{Paper Structure.} 
In  \Cref{sec:back}, we recall the logical foundations underlying our work, starting from the evolution of linear logic programming toward resource-sensitive computation, as well as basic definitions in probability theory and Bayesian networks.
In \Cref{sec:logicBackground}, we recall the results on \LO, and we show how to encode a graph traversal algorithm to detect the absence of cycles in a directed graph in \LO, thus characterizing directed acyclic graphs.
In \Cref{sec:probLO}, we introduce \probLO, and we show how to represent computations in Bayesian networks within this framework.
We conclude with \Cref{sec:conc} where we discuss some related works and future research directions.

%%%%%%%%%%%%%%%%%%%%%%%%%%%%%%%%%%%%%%%%%%%%%%%%%%%%%%%%%%%%%%%%
%%%%%%%%%%%%%%%%%%%%%%%%%%%%%%%%%%%%%%%%%%%%%%%%%%%%%%%%%%%%%%%%
%%%%%%%%%%%%%%%%%%%%%%%%%%%%%%%%%%%%%%%%%%%%%%%%%%%%%%%%%%%%%%%%
\section{Background}\label{sec:back}
%%%%%%%%%%%%%%%%%%%%%%%%%%%%%%%%%%%%%%%%%%%%%%%%%%%%%%%%%%%%%%%%
%%%%%%%%%%%%%%%%%%%%%%%%%%%%%%%%%%%%%%%%%%%%%%%%%%%%%%%%%%%%%%%%
%%%%%%%%%%%%%%%%%%%%%%%%%%%%%%%%%%%%%%%%%%%%%%%%%%%%%%%%%%%%%%%%
In this section, we provide the necessary background linear logic and focussing proof search, and we recall some basic definitions in probability theory and Bayesian networks.

%%%%%%%%%%%%%%%%%%%%%%%%%%%%%%%%%%%%%%%%%%%%%%%%%%%%%%%%%%%%%%%%
%%%%%%%%%%%%%%%%%%%%%%%%%%%%%%%%%%%%%%%%%%%%%%%%%%%%%%%%%%%%%%%%
\subsection{Multiplicative Additive Linear Logic}\label{sub:MALL}
%%%%%%%%%%%%%%%%%%%%%%%%%%%%%%%%%%%%%%%%%%%%%%%%%%%%%%%%%%%%%%%%
%%%%%%%%%%%%%%%%%%%%%%%%%%%%%%%%%%%%%%%%%%%%%%%%%%%%%%%%%%%%%%%%
Linear Logic ($\LL$) \cite{Girard87} is a refinement of Classical Logic that has raised a lot of interest in computer research, especially because of its resource sensitive-nature.
In this paper, we are interested in the fragment of $\LL$ called \emph{Multiplicative and Additive Linear Logic with mix} ($\MALLx$) \cite{Maieli2007RetractileProofNets} in which we  have two kinds of connectives: \defn{multiplicative} connectives ($\ltens$ and $\lpar$) and \defn{additive} connectives ($\lwith$ and $\lplus$). The former connectives model purely linear usage of resources, while the latter allow to express some tamed form of structural rules, allowing to duplicate or discard resources in a controlled way.
The \defn{mix} rule allows combining two derivations that do not share any resource. It can be considered in the context of logic programming as it allows to model independent and computations with separated resources \cite{kob:phd,kob:yon:ACL,abr:jag:fullMLLgames,EngSeiller2022MLLtile,Z3LL,acc:man:mon:ESOP}.

We consider \defn{formulas} defined from a countable set of \defn{atoms} $\atomSet$ as follows:%
\footnote{
    Note that we do not consider the (linear) negation $\lneg{\cdot}$ being a connective, but rather an \emph{involutive} relation between atoms, which extends to formula via the following De Morgan laws:
    $\lneg{({\lneg A})} = A$ and $\lneg{(A \ltens B)} = (\lneg A \lpar \lneg B)$, and $\lneg{(A \lpar B)} = (\lneg A \ltens \lneg B)$.
}
\begin{equation}
    A,B ::=  a \mid \lneg a \mid A \ltens B \mid A \lpar B \mid A \lwith B  \mid  A \lplus B
    \qquad\text{ with } a,\lneg a \in \atomSet
\end{equation}
and we may denote by $\bigoplus_{i=1}^n A_i$ the formula $A_1 \lplus (A_2 \lplus (\cdots \lplus A_n)\cdots)$.

A \defn{sequent} $\Gamma$ is a multiset of formulas.
A \defn{sequent system} is a set of inference rules as the ones in \Cref{fig:mall}.
We recall the sequent systems $\MLLx=\set{\AXrule,\lpar,\ltens,\mixr}$ and $\MALLx=\MLLx\cup\set{\lwith,\lplus_1,\lplus_2}$.
A \defn{derivation} in a sequent system $\XS$ of a sequent $\Gamma$ is a finite tree of sequent with root $\Gamma$ built according to the inference rules from in $\XS$.
We say that a sequent $\Gamma$ is \defn{provable} in a sequent system $\XS$ (written $\proves\XS\Gamma$) if there exists a derivation of $\Gamma$ using only the rules of $\XS$.

%%%%%%%%%%%%%%%%%%%%%%%%%%%%%%%%%%%%%%%%%%%%%%%%%%%%%%%%%%%%%%%%%
% FIG MALL
%%%%%%%%%%%%%%%%%%%%%%%%%%%%%%%%%%%%%%%%%%%%%%%%%%%%%%%%%%%%%%%%%
\begin{figure}[t]
    \adjustbox{max width=\textwidth}{$
    \def\myskip{\hskip 1em}
    \begin{array}{cccc}
        \vlinf{\AXrule}{}{\vdash A,A^\perp}{}
    \myskip
        \vliinf{\ltens}{}{\vdash\Gamma,A\ltens B,\Delta}{\vdash\Gamma,A}{\vdash B,\Delta}
    \myskip
        \vlinf{\lpar}{}{\vdash\Gamma,A\lpar B}{\vdash\Gamma,A,B}
    \myskip
    % \quad\quad&\quad
        \vliinf{\lwith}{}{\vdash\Gamma,A\lwith B}{\vdash\Gamma,A}{\vdash\Gamma,B}
    \myskip
        \vlinf{\lplus_1}{}{\vdash\Gamma,A\lplus B}{\vdash\Gamma,A}
    \myskip
        \vlinf{\lplus_2}{}{\vdash\Gamma,A\lplus B}{\vdash\Gamma,B}
    \myskip
    % \quad\quad&\quad
        \vliinf{\mixr}{}{\vdash \Gamma,\Delta}{\vdash \Gamma}{\vdash \Delta}
    \end{array}
    $}
    \caption{Sequent calculus rules for $\MALLx$.}
    \label{fig:mall}
\end{figure}
%%%%%%%%%%%%%%%%%%%%%%%%%%%%%%%%%%%%%%%%%%%%%%%%%%%%%%%%%%%%%%%%%
%%%%%%%%%%%%%%%%%%%%%%%%%%%%%%%%%%%%%%%%%%%%%%%%%%%%%%%%%%%%%%%%%

In the following we adopt the logical framework provided by {\em focusing},  while avoiding its technical intricacies.
\footnote{
    For a detailed introduction to focusing and focusing LP, we refer the interested reader to 
\cite{andreoli1992logic,andreoli-pareschi-1990,andreoli-pareschi-1991,HodasMiller1994Lolli,LiangMiller2009Focusing,Miller2021ProofTheoreticFoundationsLP,Acclavio_Maieli_2024}.
}
For this reason, we simply define special formulas, called \emph{bipoles}, that will be used to represent logic programming methods in the next subsections, and we introduce the notion of \emph{synthetic inference rule} that will be used to define the operational semantics of logic programming in terms of proof search in the next subsections.

%%%%%%%%%%%%%%%%%%%%%%%%%%%%%%%%%%%%%%%%%%%%%%%%%%%%%%%%%%%%%%%%

\begin{definition}
    A (multiplicative) \defn{bipole} is a formula of the form $\pclr P \ltens \nclr N$ with $\pclr P$ a \defn{positive monopole} (i.e., a formula built using only negated atoms and $\ltens$)
    and 
    $\nclr N$ a \defn{negative monopole} (i.e., a formula built using only atoms and $\lpar$). 
    \footnote{
        The choice of using negated atoms ($h_i^\perp$) in positive monopoles (resp., atoms ($b_i$) in negative monopoles) is a convention we adopt to allow easier readability of goals which, dually,  are made simply of atoms.
    }
    A \defn{degenerate bipole} is a positive monopole.
    
    We associate to each bipole  \defn{synthetic inference rules} as follows:
    \begin{equation}\label{eq:synthRule}
    \adjustbox{max width=.92\textwidth}{$
        \begin{array}{l}
            \bipof=\pclr{(\lneg h_1 \ltens \cdots \ltens \lneg h_n)} \ltens \nclr{(b_1 \lpar \cdots \lpar b_m)}
        \quad\rightsquigarrow\quad
            \vlinf{
                \bipof
            }{}{
                \vdash 
                \Gamma, 
                \pclr{(\lneg h_1 \ltens \cdots \ltens \lneg h_n)} 
                \ltens 
                \nclr{(b_1 \lpar \cdots \lpar b_m)},
                h_1,\ldots, h_n
            }{
                \vdash
                \Gamma,
                b_1 , \ldots, b_m,
            }
        \\[8pt]
            \bipof=\pclr{(\lneg h_1 \ltens \cdots \ltens \lneg h_n)}
        \;\rightsquigarrow\;
            \vlinf{
                \bipoft
            }{}{
                \vdash  
                \Gamma,\pclr{(\lneg h_1 \ltens \cdots \ltens \lneg h_n)},
                h_1,\ldots, h_n
            }{
                \vdash\Gamma
            }
        \quand
            \vlinf{
                \bipoft
            }{}{
                \vdash  
                \pclr{(\lneg h_1 \ltens \cdots \ltens \lneg h_n)},
                h_1,\ldots, h_n
            }{}
        \end{array}
        $}
    \end{equation}

    The \defn{synthetic calculus} $\synthc\bipset$ for a multiset of bipoles $\bipset=\set{\bipof[1],\ldots, \bipof[n]}$ is the set all synthetic inference rules of \Cref{eq:synthRule} associated to these bipoles.
\end{definition}
%%%%%%%%%%%%%%%%%%%%%%%%%%%%%%%%%%%%%%%%%%%%%%%%%%%%%%%%%%%%%%%%
%

The fact that each bipole can be seen as an elementary instruction of proof search is underlined by the following Propositions \ref{prop:syntheticMLL} and \ref{prop:syntheticMALL}, showing that provability can be reduced to provability using only synthetic inference rules associated to bipoles, corresponding to phases of a focused proof search.
It can be shown that the synthetic inference rule associated to a bipole is derivable in $\MLLx$ and $\MALLx$.
Notice that the need of $\mixr$ is crucial to derive the synthetic rule for degenerate bipoles in when the rule is not a leaf of the derivation.
%%%%%%%%%%%%%%%%%%%%%%%%%%%%%%%%%%%%%%%%%%%%%%%%%%%%%%%%%%%%%%%%
\begin{proposition}\label{prop:syntheticMLL}
    Each synthetic rule associated to a bipole is derivable in $\MLLx$.
\end{proposition}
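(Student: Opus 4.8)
The plan is to show that each synthetic rule of \Cref{eq:synthRule} can be replaced by a small fixed derivation built only from the rules of $\MLLx = \set{\AXrule,\lpar,\ltens,\mixr}$, so that any derivation in $\synthc\bipset$ can be expanded into an $\MLLx$ derivation of the same end sequent. It suffices to treat the two shapes of synthetic rule separately: the non-degenerate case $\bipof = \pclr{(\lneg h_1 \ltens \cdots \ltens \lneg h_n)} \ltens \nclr{(b_1 \lpar \cdots \lpar b_m)}$, and the degenerate case $\bipoft = \pclr{(\lneg h_1 \ltens \cdots \ltens \lneg h_n)}$.

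For the non-degenerate rule, I would first decompose the top formula $\nclr{(b_1 \lpar \cdots \lpar b_m)}$ by $m-1$ applications of the $\lpar$ rule applied to the premise $\vdash \Gamma, b_1,\ldots,b_m$, obtaining $\vdash \Gamma, (b_1 \lpar \cdots \lpar b_m)$. Then I would apply the $\ltens$ rule to combine this with a derivation of $\vdash \lneg h_1 \ltens \cdots \ltens \lneg h_n, h_1, \ldots, h_n$. The latter is itself obtained by $n-1$ applications of $\ltens$ (one per $\lneg h_i$), with the leaves being the axiom instances $\vdash \lneg h_i, h_i$: concretely, $\vdash \lneg h_1 \ltens (\lneg h_2 \ltens \cdots), h_1, \ldots, h_n$ is assembled by repeatedly splitting off one axiom $\vdash \lneg h_i, h_i$ from the partially built tensor of the remaining $\lneg h_j$'s. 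Splicing these together yields exactly the $\MLLx$ derivation realizing the synthetic rule, so its conclusion is derivable from its premise.

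For the degenerate rule there are two forms. When it is a leaf, $\vdash \pclr{(\lneg h_1 \ltens \cdots \ltens \lneg h_n)}, h_1, \ldots, h_n$ is derived exactly as above, by $n-1$ $\ltens$ steps over $n$ axiom leaves $\vdash \lneg h_i, h_i$; no $\mixr$ is needed. When it is not a leaf, the premise is $\vdash \Gamma$ and we must reach $\vdash \Gamma, \pclr{(\lneg h_1 \ltens \cdots \ltens \lneg h_n)}, h_1, \ldots, h_n$: here I would take the leaf derivation of $\vdash \pclr{(\lneg h_1 \ltens \cdots \ltens \lneg h_n)}, h_1, \ldots, h_n$ just described and combine it with the premise $\vdash \Gamma$ via a single application of $\mixr$ — this is precisely the step flagged in the excerpt as requiring $\mixr$.

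The argument is essentially a routine induction on $n$ and $m$ to build the two auxiliary derivations, then a case split on which synthetic shape is used; the only subtlety — and the one point worth stating carefully — is the non-leaf degenerate case, where $\mixr$ is genuinely needed because there is no multiplicative rule that introduces a closed positive monopole into a nonempty context without consuming a premise. I expect no real obstacle beyond bookkeeping the associativity of the iterated $\ltens$ and $\lpar$; one should simply fix the right-nested bracketing $A_1 \ltens (A_2 \ltens (\cdots))$ (matching the convention already used in the paper for $\bigoplus$) so that the inductive decomposition by the binary $\ltens$ and $\lpar$ rules lines up cleanly with the $n$-ary notation in \Cref{eq:synthRule}.
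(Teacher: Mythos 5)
Your proposal is correct and follows essentially the same route as the paper's proof: the non-degenerate rule is realized by an application of $\ltens$ whose left branch rebuilds the negative monopole via $\lpar$ from $\vdash \Gamma, b_1,\ldots,b_m$ and whose right branch derives $\vdash \lneg h_1 \ltens \cdots \ltens \lneg h_n, h_1,\ldots,h_n$ from axioms by iterated $\ltens$, while the two degenerate forms are handled exactly as you describe, with a single $\mixr$ grafting the leaf derivation onto $\vdash \Gamma$ in the non-leaf case. Your observation that $\mixr$ is genuinely unavoidable there matches the remark the paper makes just before the proposition.
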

\begin{proof}
    It suffices to consider the open derivations below on the left hand side for the non-degenerate case and  the one on the right hand side for the first rule for the degenerate case.
    \begin{equation}\label{eq:bipoleDerivation}
        \adjustbox{max width=.92\textwidth}{$
        \vlupsmash{\vlderivation{
            \vliin{\mathsf{\ltens}}{}{
                \vdash 
                \Gamma
                ,
                \pclr{(\lneg h_1 \ltens \cdots \ltens \lneg h_n)}
                \ltens 
                \nclr{(b_1 \lpar \cdots \lpar b_m)}
                ,
                h_1,\ldots,h_n
            }{
                \vliq{\lpar}{}{
                    \vdash \Gamma,
                    \nclr{(b_1 \lpar \cdots \lpar b_m)}
                }{
                    \vlhy{\vdash \Gamma, b_1 , \ldots, b_m}
                }
            }{
                \vliiiq{\ltens}{}{
                    \vdash \pclr{\lneg h_1 \ltens \cdots \ltens \lneg h_n}, h_1,\ldots,h_n
                }{
                    \vlin{\AXrule}{}{\vdash \lneg h_1, h_1 }{\vlhy{}}
                }{\vlhy{\cdots}}{
                    \vlin{\AXrule}{}{\vdash  \lneg h_n, h_n }{\vlhy{}}
                }
            }
        }}
    \qquad
        \vlupsmash{\vlderivation{
            \vliin{\mixr}{}{
                \vdash \Gamma, \pclr{\lneg h_1 \ltens \cdots \ltens \lneg h_n}, h_1,\ldots,h_n
            }{
                \vlhy{\vdash \Gamma}
            }{
                \vliiiq{ \ltens}{}{
                    \vdash \pclr{\lneg h_1 \ltens \cdots \ltens \lneg h_n}, h_1,\ldots,h_n
                }{
                    \vlin{\AXrule}{}{\vdash \lneg h_1, h_1 }{\vlhy{}}
                }{\vlhy{\cdots}}{
                    \vlin{\AXrule}{}{\vdash  \lneg h_n, h_n }{\vlhy{}}
                }
            }
        }}
        $}
    \end{equation}
    The other rule for the degenerate case is the same as the sub-derivation on the right-premise of the $\mixr$-rule in the derivation above on the right.
\end{proof}

\begin{proposition}\label{prop:syntheticMALL}
    Let $\mathcal{C}=\bigoplus_{i=1}^k \bipof[i]$ with $\bipof[1],\ldots,\bipof[k]$ bipoles
    of the form $\bipof[i]=\pclr{(\lneg h_1 \ltens \cdots \ltens \lneg h_n)} \ltens \nclr{(b_1^i \lpar \cdots \lpar b_{m_i}^i)}$ for all $i\in\intset1k$.
    Then the following synthetic inference rule is derivable in $\MALLx$:
    \begin{equation}\label{eq:syntheticDerivable}
    \adjustbox{max width=.92\textwidth}{$
        \vlinf{\mathcal{C}}{
            \text{\scriptsize $i\in\intset1k$}
        }{
            \vdash 
            \Gamma, 
            \bigoplus_{i=1}^k \bipof[i],
            h_1,\ldots, h_n
        }{
            \vdash
            \Gamma,
            b^i_1 , \ldots, b^i_{m_i}
        }
        \quand
        \vlinf{\mathcal{C}}{
            \text{\scriptsize \begin{tabular}{c}
                exists $i\in\intset1k$
                \\
                s.t. $m_i=0$
            \end{tabular}}
        }{
            \vdash 
            \bigoplus_{i=1}^k \bipof[i],
            h_1,\ldots, h_n
        }{}
    $}
    \end{equation}
\end{proposition}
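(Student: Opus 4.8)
The plan is to reduce everything to \Cref{prop:syntheticMLL} by prefixing the multiplicative synthetic derivations with an \emph{additive selection} phase. First I would record that, under the convention $\bigoplus_{i=1}^k \bipof[i] = \bipof[1] \lplus (\bipof[2] \lplus (\cdots \lplus \bipof[k]))$, for every index $i \in \intset{1}{k}$ there is a fixed sequence of $\lplus_1$/$\lplus_2$ rules turning $\vdash \Gamma, \bigoplus_{j=1}^k \bipof[j], h_1,\ldots,h_n$ into $\vdash \Gamma, \bipof[i], h_1,\ldots,h_n$ (and likewise with empty context): writing $\mathcal{C}_\ell = \bigoplus_{j=\ell}^k \bipof[j]$, apply $\lplus_2$ to descend from $\mathcal{C}_\ell$ to $\mathcal{C}_{\ell+1}$ for $\ell = 1,\ldots,i-1$, reaching $\mathcal{C}_i$, then, if $i<k$, one final $\lplus_1$ to select the left component $\bipof[i]$ of $\mathcal{C}_i = \bipof[i] \lplus \mathcal{C}_{i+1}$; if $i=k$ we already sit on $\mathcal{C}_k = \bipof[k]$. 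This is uniform in $i$ precisely because all the $\bipof[i]$ share the same positive monopole $\pclr{(\lneg h_1 \ltens \cdots \ltens \lneg h_n)}$, so the atoms $h_1,\ldots,h_n$ appearing in the conclusion do not depend on which disjunct is chosen.

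Next, for the first rule of \Cref{eq:syntheticDerivable} I would stack on top of $\vdash \Gamma, \bipof[i], h_1,\ldots,h_n$ the open $\MLLx$ derivation supplied by \Cref{prop:syntheticMLL} for the bipole $\bipof[i]$: the left-hand derivation of \Cref{eq:bipoleDerivation} when $m_i \geq 1$, whose unique open leaf is $\vdash \Gamma, b^i_1,\ldots,b^i_{m_i}$, which is exactly the required premise; and the $\mixr$-based variant (right-hand derivation of \Cref{eq:bipoleDerivation}) when $m_i = 0$, so that $\bipof[i]$ is the degenerate $\pclr{(\lneg h_1 \ltens \cdots \ltens \lneg h_n)}$ and the open leaf is $\vdash \Gamma$. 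For the second rule of \Cref{eq:syntheticDerivable} I am handed an index $i$ with $m_i = 0$, and the target conclusion $\vdash \bigoplus_{j=1}^k \bipof[j], h_1,\ldots,h_n$ has no side context; here I would run the same selection (context-free, so no $\mixr$ is needed) down to $\vdash \pclr{(\lneg h_1 \ltens \cdots \ltens \lneg h_n)}, h_1,\ldots,h_n$ and close it with the $\ltens$-and-$\AXrule$ sub-derivation occurring in \Cref{eq:bipoleDerivation}, producing a premise-free derivation.

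The bulk of the argument is routine selection bookkeeping, and derivability of the multiplicative core is already granted by \Cref{prop:syntheticMLL}. The one point that needs care is the degenerate case $m_i = 0$: one must reconcile the empty-$\lpar$ convention for the $i$-th disjunct with the side-condition on the degenerate synthetic rule of \Cref{eq:synthRule} (empty context, $\mixr$-free when it is a leaf, $\mixr$ when it is not), and verify that the additive selection still composes correctly both in the presence and in the absence of the context $\Gamma$.
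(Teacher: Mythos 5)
Your proposal is correct and follows essentially the same route as the paper's proof, which simply says the derivations are those of \Cref{eq:bipoleDerivation} prefixed by applications of $\lplus$ to select the appropriate $\bipof[i]$. You spell out the $\lplus_1$/$\lplus_2$ bookkeeping and the degenerate case ($m_i=0$, with or without context $\Gamma$) more explicitly than the paper does, but the underlying argument is identical.
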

\begin{proof}
    Similar to the derivations in (\ref{eq:bipoleDerivation}), but each starting with some applications of the rule $\lplus$ to select the appropriate bipole $\bipof[i]$ among $\bipof[1],\ldots,\bipof[k]$.
\end{proof}

%%%%%%%%%%%%%%%%%%%%%%%%%%%%%%%%%%%%%%%%%%%%%%%%%%%%%%%%%%%%%%%%
%%%%%%%%%%%%%%%%%%%%%%%%%%%%%%%%%%%%%%%%%%%%%%%%%%%%%%%%%%%%%%%%
\subsection{Probability and Bayesian (Boolean) Networks}\label{ssec:preliminaries_on_prob}
%%%%%%%%%%%%%%%%%%%%%%%%%%%%%%%%%%%%%%%%%%%%%%%%%%%%%%%%%%%%%%%%
%%%%%%%%%%%%%%%%%%%%%%%%%%%%%%%%%%%%%%%%%%%%%%%%%%%%%%%%%%%%%%%%

In this section, we recall some preliminary notions on probability for discrete variables and Bayesian networks to fix the notation used in the rest of the paper. For the sake of simplicity, in the following, we restrict to considering only the case of  \defn{random (Boolean) variables}, taking values $\vTrue$ or $\vFalse$, but the entire approach can easily be adapted and extended to encompass arbitrary discrete variables.
For each random variable $\vX$, we denote by $\sX$ the set containing the two \defn{events}, $\vX\isT$ and $\vX\isF$, in which $\vX$ takes the value $\vTrue$ or $\vFalse$ respectively, and by $x$ any event in $\sX$.

\begin{definition}\label{def:probabillities}
    Let $\vX_1,\ldots, \vX_n,\vY_1,\ldots,\vY_m,\vZ_1,\ldots,\vZ_k$ be random variables.
    \begin{itemize}
        \item A \defn{probability distribution} for $\vX$ is a function $\pr[\sX]$ assigning a probability in $[0,1]\subset \mathbb R$ to each $x\in \sX$ in such a way $\prof{\vX\isT} + \prof{\vX\isF} = 1$.
        
        \item A \defn{joint probability distribution} of the variables $\vX_1,\ldots, \vX_n$ is a probability distribution for the set of events $\sX_1\times \cdots \times \sX_n$.
        %  such that $\sum_{x_1\in \sX_1,\ldots, x_n\in \sX_n} \prof{x_1,\ldots,x_n} = 1$ where $\prof{x_1,\ldots,x_n}$ is also called the \defn{joint probability of the events} ${x_1,\ldots,x_n}$.
        Variables $\vX_1,\ldots, \vX_{k-1}$ and $\vX_{k},\ldots, \vX_n$ are 
        \defn{independent} if
        $\prof{x_1,\ldots,x_n} = \prof{x_1,\ldots,x_{k-1}}\prof{x_{k},\ldots,x_n}$
        for all $(x_1,\ldots,x_n)\in\sX_1\times\cdots\times\sX_n$.
        
        \item The \defn{conditional distribution} $\prcond{\vX_1,\ldots, \vX_n}{\vY_1,\ldots, \vY_m}$ of the variables $\vX_1,\ldots, \vX_n$ given the \defn{evidences} $\vY_1,\ldots, \vY_m$ is defined as the probability distribution such that, fixed 
        $y_1\in\sY_1,\ldots, y_m\in\sY_m$ with $\prof{y_1,\ldots,y_m}>0$,
        we have
        $\prcond{x_1,\ldots, x_n}{y_1,\ldots, y_m}= \frac{\prof{x_1,\ldots,x_n,y_1,\ldots,y_m}}{\prof{y_1,\ldots,y_m}}$
        for all $x_1\in\sX_1,\ldots, x_n\in\sX_n$.%
        % It is \defn{simple} if $n=1$ and \defn{complex} otherwise.
        \footnote{
            We can always reduce (complex) conditional probabilities to a sequence of so-called \emph{simple} conditional probabilities (i.e., conditional probabilities of a single variable given a set of other variables) by iteratively applying the \emph{chain rule}  of conditional probabilities, $\prcond{\vX,\vYs}{\vZs}=\prcond{\vX}{\vYs,\vZs} \cdot \prcond{\vYs}{\vZs}$.
        }
        
        \item 
        The \defn{marginal probability} of variables $\vX_1,\ldots,\vX_n$ with respect to  $\vY_1,\ldots,\vY_m$ is the 
        sum of the probability distributions over $\vX_1,\ldots,\vX_n$ over all possible values of $\vY_1,\ldots,\vY_m$, that is, 
        $\sum_{y_1\in\sY_1,\ldots,y_m\in\sY_m} P(\vX_1,\ldots,\vX_n, y_1,\ldots,y_m)$. The \defn{marginal probability of the events} $x_1\in\vX_1,\ldots,x_n\in \vX_n$ with respect to $\vY_1,\ldots,\vY_m$ is the 
        sum  
        $\sum_{y_1\in\sY_1,\ldots,y_m\in\sY_m} P(x_1,\ldots,x_n, y_1,\ldots,y_m)$.
    \end{itemize}
\end{definition}
%%%%%%%%%%%%%%%%%%%%%%%%%%%%%%%%%%%%%%%%%%%%%%%%%%%%%%%%%%%%%%%%

A \emph{Bayesian network} is a probabilistic graphical model representing a set of random variables and their conditional dependencies via a directed acyclic graph.
The interest of Bayesian networks lies in the fact that they allow to represent joint probability distributions in a compact way, by exploiting conditional independencies between variables explicitly encoded in the network structure.
They allow to efficiently compute marginal and conditional probability distributions of interest by exploiting these independencies.

%%%%%%%%%%%%%%%%%%%%%%%%%%%%%%%%%%%%%%%%%%%%%%%%%%%%%%%%%%%%%%%%
\begin{definition}\label{def:BN}
    A \defn{graph} is a pair $\gG = \tuple{\vof, \eof}$ made of a set of vertices $\vof$ and a set $\eof\subset\vof\times\vof$ of edges.
    A \defn{simple path} $\gG$ is a sequence of \emph{distinct} vertices $v_1,\ldots, v_n$ such that $(v_i,v_{i+1})\in \eof$ for all $i\in\intset1{n-1}$. A graph is \defn{acyclic} if it contains no simple path $v_1,\ldots, v_n$ such that $(v_n,v_1)\in \eof$. A \defn{root}  \resp{\defn{leaf}} is a vertex $v$ such that there is no $(v',v)$ \resp{ $(v,v')$} in $\eof$ for any $v'\in \vof$.
    
    A \defn{Bayesian Network} is a triple $\bn = \tuple{\vof[\bn], \eof[\bn],\pof[\bn]}$ such that
    % \begin{itemize}
        % \item 
        $\vof[\bn]$ is a set of random variables,
        % \item 
        $\eof[\bn]\subset \vof[\bn] \times \vof[\bn]$ is a set of edges such that
        $\tuple{\vof[\bn],\eof[\bn]}$ form a directed acyclic graph,
        and
        % \item 
        $\pof[\bn]$ is a map associating to each variable $\vX\in \vof[\bn]$ a \defn{conditional probability table}
        whose rows consist of a possible assignment of values to the variables in the set of \defn{parents} of $\vX$ (denoted $\parof{\vX}$), together with the corresponding conditional probabilities for the events $\vX\isT$ and $\vX\isF$ as follows, where $\parof{\vX}=\set{\vY_1, \ldots, \vY_n}$:
        $$
        \text{Row table: }\\
        \resizebox{0.8\textwidth}{!}{
        \begin{tabular}{|c|c|c|c|c|c|c}
            \hline
            \rowcolor{lightergray}
            $\vY_1=v_1$ & 
            $\underset{~}{\overset{\phantom{H^H}}{\cdots}}$ & 
            $\vY_n=v_n$ & 
            $\prcond{\vX=\vTrue}{\vY_1=v_1,\ldots, \vY_n= v_n}$&
            $\prcond{\vX=\vFalse}{\vY_1=v_1,\ldots, \vY_n= v_n}$
            \\\hline
        \end{tabular}
        }
        $$
    % \end{itemize}
\end{definition}
%%%%%%%%%%%%%%%%%%%%%%%%%%%%%%%%%%%%%%%%%%%%%%%%%%%%%%%%%%%%%%%%
We represent Bayesian networks as in \Cref{fig:BN1}, where nodes represent variables and directed edges represent dependencies between them, 
while conditional probability tables are connected to the corresponding nodes with dashed lines, and, to avoid ambiguities, they include an extra row to indicate the order of the variables in the parent set we are considering.

In a Bayesian network $\bn=\genbn$, the joint probability distribution of all variables in $\vXs$ can be computed as the product of the conditional probability distributions of each variable given its parents in the graph $\tuple{\vXs,\eof}$. This takes the name of \emph{Factorization Theorem} (\Cref{thm:BNfactorization}).

\begin{theorem}\label{thm:BNfactorization}
    If $\bn=\genbn$ is a Bayesian Network,
    then
    $\prod_{\vX\in\vof}
    \prcond{\vX}{\parof{\vX}}$.
\end{theorem}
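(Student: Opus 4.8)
The plan is to establish the factorization $\prof{\vX_1,\ldots,\vX_n}=\prod_{\vX\in\vof}\prcond{\vX}{\parof{\vX}}$ (writing $\vof=\set{\vX_1,\ldots,\vX_n}$) by reducing it, through a topological enumeration of the network's DAG, to the chain rule of conditional probability combined with the conditional-independence assumptions encoded by the graph structure. Since $\tuple{\vof,\eof}$ is a finite directed acyclic graph it admits an enumeration $\vX_1,\ldots,\vX_n$ of $\vof$ with the property that $(\vX_i,\vX_j)\in\eof$ implies $i<j$ --- obtained by repeatedly removing a root (one exists by acyclicity), or by appealing to the acyclicity characterization of \Cref{thm:dagLO}. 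This topological property has two consequences used below: $\parof{\vX_j}\subseteq\set{\vX_1,\ldots,\vX_{j-1}}$ for every $j$, and every variable in $\set{\vX_1,\ldots,\vX_{j-1}}$ that is not a parent of $\vX_j$ is also not a descendant of $\vX_j$.

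First I would fix an arbitrary joint event $(x_1,\ldots,x_n)\in\sX_1\times\cdots\times\sX_n$ and iterate the defining identity for conditional distributions from \Cref{def:probabillities} (\ie the chain rule of conditional probability) to write $\prof{x_1,\ldots,x_n}=\prod_{j=1}^{n}\prcond{x_j}{x_1,\ldots,x_{j-1}}$, with the $j=1$ factor read as $\prof{x_1}$; the corner case where some prefix marginal vanishes is handled separately, since then both sides of the target identity are $0$. Next I would invoke the local Markov property of a Bayesian network --- each variable is conditionally independent of its non-descendants given its parents --- which, together with the topological property above, collapses $\prcond{x_j}{x_1,\ldots,x_{j-1}}$ to the conditional probability of $x_j$ given only the coordinates of $x_1,\ldots,x_{j-1}$ indexed by $\parof{\vX_j}$, \ie to the corresponding row of the table $\pof[\bn](\vX_j)$. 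Multiplying these equalities over $j$ gives $\prof{x_1,\ldots,x_n}=\prod_{j=1}^{n}\prcond{x_j}{\parof{\vX_j}}$, and since $(x_1,\ldots,x_n)$ was arbitrary this is the claimed identity of distributions, with root variables (for which $\parof{\vX}=\emptyset$) contributing their unconditional tables.

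The hard part is conceptual rather than computational: deciding which semantics for ``Bayesian network'' is read into \Cref{def:BN}, which records only the conditional probability tables. If a joint distribution over $\vof$ is taken as given together with a DAG satisfying the local Markov condition, the argument above is complete as stated. If instead the product on the right is to be read as the \emph{definition} of the joint distribution $\pr$, then the statement becomes the assertion that this product is a genuine probability distribution whose induced conditionals agree with the tables; that variant I would prove by induction on the topological order, summing out $\vX_n$ via $\sum_{x_n\in\sX_n}\prcond{x_n}{\parof{\vX_n}}=1$ to reduce to the Bayesian network on $\vX_1,\ldots,\vX_{n-1}$, and then verifying that the conditional of $\vX_j$ given its parents computed from this $\pr$ coincides with $\pof[\bn](\vX_j)$. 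In either case the only non-routine ingredient is the topological enumeration of $\tuple{\vof,\eof}$; the remaining steps are bookkeeping, so I would state the intended convention explicitly and then carry out the matching route.
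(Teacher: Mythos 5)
The paper states \Cref{thm:BNfactorization} without any proof: it is the classical factorization theorem for Bayesian networks, imported from the literature (Pearl, Jensen), so there is no in-paper argument to compare yours against. Your proposal is the standard textbook proof --- topological enumeration of the DAG (which exists by acyclicity), the chain rule $\prof{x_1,\ldots,x_n}=\prod_{j=1}^{n}\prcond{x_j}{x_1,\ldots,x_{j-1}}$, and the local Markov property to collapse each conditioning set to $\parof{\vX_j}$ --- and it is correct, including the handling of vanishing prefix marginals and the sink-summation induction for the alternative reading in which the product \emph{defines} the joint distribution.

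Your conceptual caveat is the substantive point, and it is well taken. \Cref{def:BN} equips a Bayesian network only with conditional probability tables: it neither posits an ambient joint distribution satisfying the local Markov condition nor declares the product to be the definition of $\pr$; moreover the theorem as printed is missing its left-hand side (it reads ``then $\prod_{\vX\in\vof}\prcond{\vX}{\parof{\vX}}$'' with no equality). Consequently the local Markov property you invoke in the first route is not derivable from the paper's definitions and must be added as an explicit hypothesis, or else the second route must be taken --- the product read as the definition of $\pr$, with your inductive verification (summing out a sink using $\sum_{x\in\sX}\prcond{x}{\parof{\vX}}=1$) showing it is a genuine distribution whose conditionals recover the tables. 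Either choice closes the statement; the paper leaves the choice implicit, and since \Cref{thm:BNcomp} only needs the identity itself, either convention suffices for the way the theorem is used downstream.
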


% For example, the joint probability of the network of \Cref{fig:BN1} is factorized as follows:
% $
% \prof{\vC,\vS,\vR,\vW,\vT}
% =
% \prof{\vC}
% \cdot
% \prcond{\vS}{\vC}
% \cdot
% \prcond{\vR}{\vC}
% \cdot
% \prcond{\vW}{\vR,\vS}
% \cdot
% \prcond{\vT}{\vR}
% $.

%%%%%%%%%%%%%%%%%%%%%%%%%%%%%%%%%%%%%%%%%%%%%%%%%%%%%%%%%%%%%%%%
%%%%%%%%%%%%%%%%%%%%%%%%%%%%%%%%%%%%%%%%%%%%%%%%%%%%%%%%%%%%%%%%
%%%%%%%%%%%%%%%%%%%%%%%%%%%%%%%%%%%%%%%%%%%%%%%%%%%%%%%%%%%%%%%%
\section{From Logic Programming to Linear Objects}\label{sec:logicBackground}
%%%%%%%%%%%%%%%%%%%%%%%%%%%%%%%%%%%%%%%%%%%%%%%%%%%%%%%%%%%%%%%%
%%%%%%%%%%%%%%%%%%%%%%%%%%%%%%%%%%%%%%%%%%%%%%%%%%%%%%%%%%%%%%%%
%%%%%%%%%%%%%%%%%%%%%%%%%%%%%%%%%%%%%%%%%%%%%%%%%%%%%%%%%%%%%%%%

Logic programming originated with \Prolog \cite{ClocksinMellish1981}, a language grounded in Horn clause logic and based on a goal-directed proof procedure (SLD-resolution).
\Prolog's computational interpretation relies on \emph{unrestricted} reuse of program clauses, implicitly assuming that logical resources can be consumed arbitrarily many times. 
While this assumption is suitable for many symbolic tasks, it limits \Prolog's expressiveness when modeling computations where resource usage matters as, e.g., state changes, process calculi, or probabilistic reasoning.
For example, in \Cref{fig:prologExample} we show a simple \Prolog program and two queries solved using it: while the first uses each method only once, the second query requires to use one of the methods twice to derive the goal.

%%%%%%%%%%%%%%%%%%%%%%%%%%%%%%%%%%%%%%%%%%%%%%%%%%%%%%%%%%%%%%%%
\begin{figure}[t]
    \small
    \def\myskip{\hskip 2em}
    \centering
    $
    \begin{array}{c@{\myskip}|@{\myskip}c@{\myskip}|@{\myskip}c}
        \begin{array}{l@{\coloneq\quad}l}
            \method_1   &   {\tt a :- b, c.}
        \\  
            \method_2   &   {\tt b.}
        \\
            \method_3   &   {\tt c.}
        \end{array}
    &
        \vlderivation{
            \vliin{\method_1}{}{ 
                \prodash  a
            }{
                \vlin{\method_2}{}{ \prodash  b}{\vlhy{ \checkmark}}
            }{
                \vlin{\method_3}{}{ \prodash c }{\vlhy{ \checkmark}}
            }
        }
    &
        \vlderivation{
            \vliin{\wedge}{}{\prodash a, c}{
                \vliin{\method_1}{}{ 
                    \prodash a
                }{
                    \vlin{\method_2}{}{ \prodash  b}{\vlhy{ \checkmark}}
                }{
                    \vlin{\method_3}{}{ \prodash  c }{\vlhy{ \checkmark}}
                }
            }{
                \vlin{\method_3}{}{
                    \prodash c
                }{\vlhy{\checkmark}}
            }
        }
    \\[-8pt]
    \end{array}
    $
    \caption{
        A \proLog program made of three methods $\method_1$, $\method_2$ and $\method_3$, and the representation of the computation of the two queries ${\tt :- a.}$ and ${\tt :- a, c.}$.
    }
    \label{fig:prologExample}
    % \end{center}
\end{figure}
%%%%%%%%%%%%%%%%%%%%%%%%%%%%%%%%%%%%%%%%%%%%%%%%%%%%%%%%%%%%%%%%

The introduction of linear logic
%(LL) by Girard \cite{Girard87} 
provided a refined logical foundation where resources are treated explicitly: in linear logic, propositions are consumed when used (unless marked as reusable).
This perspective leads naturally to programming languages where the flow of computation corresponds to the controlled consumption and production of resources. Linear logic programming thus differs fundamentally from classical logic programming: instead of deriving consequences from static facts, it orchestrates a dynamic evolution of states represented as multisets of linear formulas.

Exploiting these ideas, Andreoli and Pareschi introduced \defn{Linear Objects} (\LO) \cite{andreoli-pareschi-1990,andreoli-pareschi-1991}, a logic programming language grounded in the multiplicative-additive fragment of linear logic ($\MALL$).
In \LO, methods are \proLog-like multi-head clauses where both preconditions and post-conditions are represented as multisets of linear formulas (see \Cref{eq:LOmethIntro}).
Its operational semantics is defined in terms of proof search in linear logic: 
a state of the system is represented as a sequent containing the current multiset of goals and the current methods available to the program program itself, and a program execution corresponds to a proof search in linear logic, where the execution of each method corresponds to the application of a synthetic inference rule associated to a bipole formula representing the method.
See \Cref{fig:OSLO} for the rules of the operational semantics of the multiplicative fragment of \LO and the corresponding synthetic inference rules in $\MLLx$.
For example, the two \proLog queries in \Cref{fig:prologExample} are in a purely multiplicative fragment of \LO as shown in \Cref{fig:LOexample}. Here is possible to see how the first query can be solved by using each method once, while the second query get stuck because method $\method_3$ cannot be used twice.

%%%%%%%%%%%%%%%%%%%%%%%%%%%%%%%%%%%%%%%%%%%%%%%%%%%%%%%%%%%%%%%%
% FIG OS probLO
%%%%%%%%%%%%%%%%%%%%%%%%%%%%%%%%%%%%%%%%%%%%%%%%%%%%%%%%%%%%%%%%
\begin{figure}[t]
    \centering
    \adjustbox{max width=.92\textwidth}{$
    \begin{array}{c@{\quad}|@{\quad}c}
        \vlinf{\expr}{}{
            \state{\prog, \meth{h_1,\ldots, h_n}{b_1,\ldots, b_m}}{h_1,\ldots, h_n,\Gamma}
        }{
            \state{\prog}{ b_1,\ldots, b_m,\Gamma}
        }
    &
        \vlinf{\bipof}{}{
            \vdash \Delta_\prog,\bip{\lneg h_1 \ltens \cdots \ltens \lneg h_n}{b_1\lpar \cdots \lpar b_m}, h_1,\ldots, h_n,\Gamma
        }{
            \vdash \Delta_\prog,b_1,\ldots, b_m,\Gamma
        }
    \\[10pt]
        \vlinf{\dexpr}{}{
            \state{\prog,\meth{h_1,\ldots, h_n}{.}}{h_1,\ldots, h_n,\Gamma}
        }{
            \state{\prog}{\Gamma}
        }
    &
        \vlinf{\bipoft}{}{
            \vdash \Delta_\prog,\pclr{\lneg h_1 \ltens \cdots \ltens \lneg h_n}, h_1,\ldots, h_n,\Gamma
        }{
            \vdash \Delta_\prog,\Gamma
        }
    \\[8pt]
        \vlinf{\termr}{}{
            \state{\meth{h_1,\ldots, h_n}{.}}{h_1,\ldots, h_n}
        }{}
    & 
        \vlinf{\bipof}{}{
            \vdash \pclr{\lneg h_1 \ltens \cdots \ltens \lneg h_n}, h_1,\ldots, h_n
        }{}
    \end{array}
    $}
    \caption{Rules of the operational semantics of the multiplicative fragment of \LO, and the corresponding synthetic inference rules.}
    \label{fig:OSLO}
\end{figure}
%%%%%%%%%%%%%%%%%%%%%%%%%%%%%%%%%%%%%%%%%%%%%%%%%%%%%%%%%%%%%%%%
%%%%%%%%%%%%%%%%%%%%%%%%%%%%%%%%%%%%%%%%%%%%%%%%%%%%%%%%%%%%%%%%

%%%%%%%%%%%%%%%%%%%%%%%%%%%%%%%%%%%%%%%%%%%%%%%%%%%%%%%%%%%%%%%%
\begin{figure}[t]
    \small
    \def\myskip{\hskip 2em}
    \hspace{3cm}
    \centering
    \adjustbox{max width=.92\textwidth}{$
    \begin{array}{c@{\myskip}|@{\myskip}c@{\myskip}|@{\myskip}c}
        \begin{array}{l@{\coloneq\quad}l}
            \bipof[1]   & \bip{\lneg a}{(b \lpar c)}
        \\  
            \bipof[2]   &   \pclr{\lneg b}
        \\
            \bipof[3]   &   \pclr{\lneg c}
        \end{array}
    &
        \vlderivation{
            \vlin{\bipof[1]}{}{ 
                \pclr{\lneg a} \ltens \nclr{(b \lpar c)},
                \pclr{\lneg b},
                \pclr{\lneg c},
                a
            }{
                \vlin{\bipoft[2]}{}{
                    \vdash 
                    \pclr{\lneg b},
                    \pclr{\lneg c},
                    b, c
                }{
                    \vlin{\bipof[3]}{}{
                        \vdash 
                        \pclr{\lneg c},
                        c
                    }{
                        \vlhy{}
                    }
                }
            }
        }
    &
       \vlupsmash{\vlderivation{
            \vlin{\bipof[3]}{}{ 
                \vdash 
                \pclr{\lneg a} \ltens \nclr{(b \lpar c)},
                \pclr{\lneg b},
                \pclr{\lneg c},
                a,c
            }{
                \vlin{\bipof[1]}{}{
                    \vdash 
                    \pclr{\lneg a} \ltens \nclr{(b \lpar c)},
                    \pclr{\lneg b},
                    a
                }{
                    \vlin{\bipoft[2]}{}{
                        \vdash 
                        \pclr{\lneg b},
                        b,c
                    }{
                        \vlhy{\vdash c}
                    }
                }
            }
        }}
    \end{array}
    $}

    \caption{
        The bipoles associated to the methods in \Cref{fig:prologExample}, and 
        two derivations in the synthetic sequent calculus reminding the one for the two queries in the same figure.
        %  corresponding to the computations in \Cref{fig:prologExample}.  MATTEO: non è vero che corrispondono alla figura 4, perché quella a dx fallisce mentre in Fig 4 è successfull
        Here the rightmost one get stuck because the method $\bipof[3] = \pclr{\lneg c}$ can only be used once.
    }
    \label{fig:LOexample}
\end{figure}
%%%%%%%%%%%%%%%%%%%%%%%%%%%%%%%%%%%%%%%%%%%%%%%%%%%%%%%%%%%%%%%%

In this section, we recall a small fragment of \LO enriched with the $\mixr$ rule, corresponding to the multiplicative fragment of linear logic $\MLLx$ (\Cref{fig:OSLO}), and we show (\Cref{sec:dags}) how execution of a purely multiplicative \LO program can characterize the absence of cycles in directed graphs.

%%%%%%%%%%%%%%%%%%%%%%%%%%%%%%%%%%%%%%%%%%%%%%%%%%%%%%%%%%%%%%%%
%%%%%%%%%%%%%%%%%%%%%%%%%%%%%%%%%%%%%%%%%%%%%%%%%%%%%%%%%%%%%%%%
\subsection{From Exploring Trees to Exploring Directed Acyclic Graphs using \LO}\label{sec:dags}
%%%%%%%%%%%%%%%%%%%%%%%%%%%%%%%%%%%%%%%%%%%%%%%%%%%%%%%%%%%%%%%%
%%%%%%%%%%%%%%%%%%%%%%%%%%%%%%%%%%%%%%%%%%%%%%%%%%%%%%%%%%%%%%%%
We construct the encoding of a directed graph $\gG=\tuple{\vof,\eof}$ by associating to each node $v$ of a graph an atom $a_v$.
Then, if the node $v$ has parents $u_1,\ldots,u_n$ and $m$ child nodes, we define a bipole containing literals representing `half edges' connecting the node to its parents and child nodes. 
More precisely, we associate to the node $v$ the bipole $\bipof[v]$ made of the \emph{tensor} ($\ltens$) of $m+1$ copies of $\lneg a_v$ together with the \emph{par} ($\lpar$) of all atoms $a_{u_1},\ldots,a_{u_n}$, as shown in \Cref{eq:nodeToMLL} below.
\begin{equation}\label{eq:nodeToMLL}
    \small
    \underbrace{\overbrace{\begin{array}{c}
        \vbn2{$u_1$} \quad\cdots\quad \vbn3{$u_n$}
        \\[-3pt]
        \vbn1{$v$}
        \\[-3pt]
        \vbn4{$w_1$} \quad\cdots\quad \vbn5{$w_m$}
    \end{array}}^{\text{ parents of $v$}}}_{\text{immediate descendants (children) of $v$}}
    \Iedges{bn2/bn1,bn3/bn1}
    \Oedges{bn1/bn4,bn1/bn5}
    \rightsquigarrow\quad
    (\underbrace{\pclr{\lneg a_{v} \ltens \cdots \ltens \lneg a_{v}}}_{m+1 \text{ times}}) 
    \ltens 
    \nclr{a_{u_1} \lpar \cdots \lpar a_{u_n}}
\end{equation}
%%%%%%%%%%%%%%%%%%%%%%%%%%%%%%%%%%%%%%%%%%%%%%%%%%%%%%%%%%%%%%%%

%%%%%%%%%%%%%%%%%%%%%%%%%%%%%%%%%%%%%%%%%%%%%%%%%%%%%%%%%%%%%%%%
\begin{definition}
    Let $\gG=\tuple{\vof,\eof}$ be a graph, and $\atomSet$ a set of atoms containing a subset of atoms $\set{a_v \mid v\in\vof}$ in one-to-one correspondence with $\vof$.
    The \defn{encoding} of $\gG$ is the multiset of bipoles $\encof{\gG}$ containing a 
    bipole $\bipof[v]$ for each $v\in\vof$ defined as \Cref{eq:nodeToMLL}.
    Note that the bipole of each root node is degenerate.
\end{definition}
%%%%%%%%%%%%%%%%%%%%%%%%%%%%%%%%%%%%%%%%%%%%%%%%%%%%%%%%%%%%%%%%

%%%%%%%%%%%%%%%%%%%%%%%%%%%%%%%%%%%%%%%%%%%%%%%%%%%%%%%%%%%%%%%%
\begin{restatable}{theorem}{thmAcyclic}\label{thm:dagLO}
    Let $\gG=\tuple{\set{v_1,\ldots, v_n},\eof}$ be a graph.
    Then, $\gG$ is acyclic iff $\proves{\MLLx} \encof{\gG}, a_{v_1},\ldots,a_{v_n}$.
\end{restatable}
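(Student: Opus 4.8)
The plan is to reduce the claim, via the focusing reduction of provability to proof search with synthetic rules, to the elementary graph-theoretic fact that a finite directed graph can be dismantled by iterated removal of sinks if and only if it is acyclic.

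First, by Proposition~\ref{prop:syntheticMLL} together with the reduction of provability to search with synthetic inference rules recalled above, $\proves{\MLLx}\encof{\gG},a_{v_1},\ldots,a_{v_n}$ holds if and only if the sequent $\encof{\gG},a_{v_1},\ldots,a_{v_n}$ admits a derivation using only the rules of the synthetic calculus $\synthc{\encof{\gG}}$ (no $\AXrule$ is directly usable here, since every negated atom $\lneg a_v$ occurs guarded inside a tensor of the bipole $\bipof[v]$). I would then note that, in this purely multiplicative setting, each synthetic rule is \emph{unary} (for non-degenerate bipoles, and for degenerate ones applied internally through $\mixr$) or \emph{nullary} (for degenerate bipoles closing a branch); hence any such derivation is a single branch, i.e.\ a finite sequence of rule applications. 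Writing $m_v$ for the number of children of $v$, applying $\bipof[v]$ removes one occurrence of the formula $\bipof[v]$ together with $m_v+1$ occurrences of $a_v$ and adds one occurrence of $a_u$ for each parent $u$ of $v$; the degenerate nullary rule moreover requires the whole current sequent to be exactly $\bipof[v],a_v,\ldots,a_v$, which forces $v$ to be a root.

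Next I would establish a bookkeeping invariant along such a branch, describing the current sequent in terms of the set $S$ of already-processed nodes: since each $\bipof[v]$ occurs once in the conclusion and is never reproduced, no node is processed twice, and for each $v$ the multiplicity of $a_v$ is $1$ plus the number of already-processed children of $v$, decreased by $m_v+1$ once $v$ itself has been processed. The consequences are that $\bipof[v]$ is applicable exactly when $v\notin S$ and all children of $v$ lie in $S$ (so $v$ is a sink of the subgraph induced by the unprocessed nodes), and that for a \emph{closing} derivation a count of bipole subformulas forces all $n$ nodes to be processed (its unique leaf sequent contains exactly one bipole, and each downward step restores one), with no stray positive atoms left over. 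Thus a closing derivation yields an enumeration $v_{\sigma(1)},\ldots,v_{\sigma(n)}$ in which each $v_{\sigma(i)}$ is a sink of $\gG$ restricted to $\{v_{\sigma(i)},\ldots,v_{\sigma(n)}\}$ — equivalently, a reverse topological order — and, conversely, processing the nodes along any reverse topological order produces a valid closing branch, the last (necessarily source) node being closed by the nullary synthetic rule.

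Finally I would invoke the standard fact that a finite directed graph admits such a reverse topological order — equivalently, can be emptied by repeatedly deleting sink vertices — if and only if it is acyclic: if acyclic, the subgraph induced by the unprocessed nodes always has a sink (otherwise it would contain a cycle), which we delete; conversely, no vertex of a cycle can ever be deleted while another vertex of that cycle survives. Combining this with the previous paragraph gives the desired equivalence. I expect the main obstacle to be the careful verification of the invariant — in particular, tracking the multiplicities of the atoms $a_v$ and confirming that a derivation that fails to process every node genuinely gets stuck (a residual sequent of unconsumed bipoles and stray positive atoms being unprovable), which is exactly what upgrades ``there is a closing derivation'' to ``there is a \emph{full} topological order'' rather than merely a partial one.
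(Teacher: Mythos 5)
Your proposal is correct, and it rests on the same core insight as the paper's proof: $\MLLx$-provability of $\encof{\gG},a_{v_1},\ldots,a_{v_n}$ amounts to the possibility of dismantling $\gG$ by repeatedly removing childless vertices, which is exactly acyclicity. The packaging, however, differs enough to be worth noting. The paper proves the forward direction by induction on the diameter of the (acyclic) graph, firing all leaf bipoles in one round and recursing on the graph with its leaves deleted, and it dispatches the converse with a one-sentence remark that a cycle leaves some bipole forever unfireable. You instead observe that every synthetic rule here is unary or nullary, so a synthetic derivation is a single branch, and you track the multiplicity invariant for the atoms $a_v$ to obtain an exact correspondence between closing derivations and reverse topological (sink-elimination) orders of $\gG$. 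Your counting argument --- the unique leaf sequent carries one bipole and each downward step restores one, so a closing derivation must consume all $n$ bipoles, with no stray atoms left --- is precisely the point the paper's converse glosses over, and it is what upgrades ``some bipoles fire'' to ``every bipole fires''. Both arguments depend on the focusing reduction of $\MLLx$-provability to proof search with synthetic rules only (stated informally around \Cref{prop:syntheticMLL} but not proved in the paper); you at least flag that dependency explicitly. Net effect: your route buys a fully rigorous converse and an explicit bijection between proofs and topological orders at the price of some bookkeeping, whereas the paper's induction on diameter buys brevity.
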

\begin{proof}
    The idea is that the proof search progressively visit the nodes of $\gG$ from leaves to roots, exploring a new node as soon as all its child nodes have been processed. 
    We proceed by induction on the \emph{diameter} of $\gG$ defined as the maximal length $n$ of a path in $\gG$ (from a root to a leaf). Note that the diameter is well defined because $\gG$ is acyclic.
    
    If $n=0$, then the sequent to be proved is $\bipof[v_1],\ldots, \bipof[v_n], a_{v_1},\ldots,a_{v_n}$ where $\bipof[v_i]=\lneg a_{v_i}$ for all $i\in\intset1n$ because all nodes are roots. The proof is immediate by applying all the bipoles.
    Notice that in this passage is clear the need for the use of the $\mixr$ rule to partition the sequent into sequents of the form $\vdash\bipof[v_i], a_{v_i}$ for each vertex (\ie connected component) of $\gG$.
    
    Otherwise, $n>0$ and we apply all the rules associated to the bipoles $\bipof[v]$ for each leaf $v$ of $\gG$. 
    We obtain a sequent of the form $\encof{\gG'}, a_{u_1},\ldots,a_{u_m}$ where $\gG'$ is the graph with vertices $u_1,\ldots,u_m$ obtained by removing all leaves from $\gG$.
    Since $\gG'$ is still a directed acyclic graph with strictly smaller diameter than $\gG$, we conclude by inductive hypothesis.

    To prove the converse, it suffices to note that, if a cycle existed in $\gG$, then there would be at least one bipole that could never be processed by the proof search because at least one of its `input' atoms would never be available due to the cyclic dependency between nodes.
\end{proof}
%%%%%%%%%%%%%%%%%%%%%%%%%%%%%%%%%%%%%%%%%%%%%%%%%%%%%%%%%%%%%%%%

\begin{corollary}
    Determining whether a set $\encof{\gG}$ of $n$ bipoles that encodes a graph describes a directed acyclic graph can be decided in \LO (with $\mixr$) in linear time.
\end{corollary}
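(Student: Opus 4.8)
The plan is to analyze the proof search described in Theorem~\ref{thm:dagLO} as an algorithm and count the number of rule applications and the cost of each. First I would observe that the proof built in Theorem~\ref{thm:dagLO} is \emph{deterministic} up to the order in which leaves are processed: at each stage one simply selects any current leaf $v$ of the residual graph, applies the synthetic rule $\bipof[v]$, and continues. Crucially, each bipole $\bipof[v]$ is applied \emph{exactly once} along the whole derivation (after it is applied, $v$ is removed and never reappears), and the $\mixr$ rule is used only to split off already-closed connected components. Hence the total number of synthetic-rule applications is exactly $n = \sizeof{\encof{\gG}}$, plus at most $n$ applications of $\mixr$ for the leaf/root base cases.

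Next I would bound the cost of each step. Maintaining, for each still-unprocessed node $v$, a counter of how many of its child nodes remain unprocessed is the standard Kahn-style bookkeeping: a node becomes a "current leaf" precisely when this counter reaches zero. Applying $\bipof[v]$ consumes the $m+1$ copies of $\lneg a_v$ present in the current sequent together with the atom $a_v$, produces the atoms $a_{u_1},\dots,a_{u_n}$ for the parents of $v$, and then requires decrementing the counters of those parents and enqueueing any that drop to zero. The total work done over all steps is therefore proportional to $\sum_{v\in\vof}(1 + |\parof{v}| + |\text{children}(v)|)$, which is $O(\sizeof{\vof}+\sizeof{\eof})$; since the bipole $\bipof[v]$ literally contains one literal per incident edge (plus one), $\sizeof{\eof}$ is linear in the \emph{size} of the input $\encof{\gG}$, so the running time is linear in the size of the encoding.

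Finally I would close the two directions using Theorem~\ref{thm:dagLO} itself. If $\gG$ is acyclic, the procedure above never blocks: by the argument in the theorem, at every stage the residual graph is a nonempty DAG and hence has a leaf, so after $n$ steps the sequent $\encof{\gG}, a_{v_1},\dots,a_{v_n}$ is proved and the algorithm answers "yes". If $\gG$ is cyclic, then, again by the theorem, no proof exists; concretely, the counters of the nodes lying on a cycle never reach zero, so the queue eventually empties with some bipoles still unprocessed, and the algorithm answers "no". Either way the algorithm halts after $O(n)$ synthetic rule applications with total cost linear in $\sizeof{\encof{\gG}}$, which is the claim.

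The main obstacle is not logical but a matter of being careful about the cost model: one must make sure that "processing a leaf" — selecting which bipole to apply, locating its $m+1$ occurrences of $\lneg a_v$ and the atom $a_v$ in the current multiset, and updating the bookkeeping for its parents — is genuinely constant time \emph{per incident edge} rather than, say, requiring a scan of the whole sequent. This is handled by the usual indexed-queue representation (a worklist of ready leaves, plus per-node remaining-child counters), exactly as in linear-time topological sorting; with that data structure in place the amortized analysis in the previous paragraph goes through and no residual $\log$ factors appear.
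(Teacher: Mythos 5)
Your proposal is correct and follows essentially the same route as the paper, which disposes of the corollary in one line by observing that the expansion rules are reversible (your ``deterministic up to the order in which leaves are processed'', i.e.\ no backtracking is ever needed) and that each method is used at most once (your bound of $n$ synthetic-rule applications). The only difference is that you additionally spell out the Kahn-style counter/worklist bookkeeping needed to make ``linear time'' precise at the level of the cost model, which the paper leaves implicit.
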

\begin{proof}
    It follows from the proof of \Cref{thm:dagLO} after remarking that the expansion rules are reversible (see \cite{Maieli22}) and that each method can be used at most once.
\end{proof}

\begin{remark}\label{rem:mixr}
    The derivation we construct to prove \Cref{thm:dagLO} reproduces a graph traversal from leaves to the roots.
    The multi-headed nature of the methods allows us to process the presence of multiple triggering conditions (in this case, if all children have been processed) without the need of aggregating this information in a single condition.
    This prevents the risk of creating races on (linear) resources that could potentially lead to deadlocks.

    At the same time, we observe that if at any point the exploration of the graph could be separated into two independent explorations, then we could introduce a branching point in the search (and in the derivation using the $\mixr$-rule).
    However, such a branching point cannot be safely created preemptively because it could lead to deadlocks if resources are separated in the wrong way.
    For this reason, the $\mixr$ rule is pushed into the rule $\dexpr$ of the operational semantics of \LO, which allows handling multi-rooted graphs without the need of preemptively guessing how to separate the exploration of the graph into independent branches.
\end{remark}

%%%%%%%%%%%%%%%%%%%%%%%%%%%%%%%%%%%%%%%%%%%%%%%%%%%%%%%%%%%%%%%%
%%%%%%%%%%%%%%%%%%%%%%%%%%%%%%%%%%%%%%%%%%%%%%%%%%%%%%%%%%%%%%%%
%%%%%%%%%%%%%%%%%%%%%%%%%%%%%%%%%%%%%%%%%%%%%%%%%%%%%%%%%%%%%%%%
\section{From \LO to \probLO}\label{sec:probLO}
%%%%%%%%%%%%%%%%%%%%%%%%%%%%%%%%%%%%%%%%%%%%%%%%%%%%%%%%%%%%%%%%
%%%%%%%%%%%%%%%%%%%%%%%%%%%%%%%%%%%%%%%%%%%%%%%%%%%%%%%%%%%%%%%%
%%%%%%%%%%%%%%%%%%%%%%%%%%%%%%%%%%%%%%%%%%%%%%%%%%%%%%%%%%%%%%%%

We introduce \probLO, a probabilistic extension of \LO suitable to model and reason about Bayesian networks.
We start by defining the syntax of methods, conditionals, tables and programs.
To simplify the presentation for the purpose of modeling Boolean Bayesian networks\footnote{
    The syntax we present can be easily generalized to arbitrary discrete variables.
}, we introduce a typing discipline on the atoms we use in our programs, and we limit the definition of \probLO-conditionals to the case required to model simple conditional probability distributions of Boolean variables:
we only allow two atoms per basic type, and we define conditionals to be made of two methods, each for one of the two Boolean values.

\begin{definition}
    Let $\vX_1,\ldots,\vX_n$ be random Boolean variables.
    For each variable $\vX_i$, we define two distinct atoms $\at{\vX_i}$ and $\af{\vX_i}$ of \defn{type} $\vX_i$.

    A \defn{simple (Boolean) \probLO-method} of \defn{type} $\vX_1\times\cdots\times  \vX_n \prodash \vY_1,\ldots, \vY_n$ is a \LO-method with \defn{multi-head} ${\tt[h_1,\ldots, h_n ]}$ of \defn{length} $n$
    and
    \defn{body} ${\tt[b_1,\ldots, b_m]}$ of \defn{length} $m$ annotated with a probability value $p\in [0,1]$ written as follows:
    \begin{equation}
        \small
            \method = 
            \meth{\overbrace{\tt h_1,\ldots, h_n}^{\text{multi-head}}}{\overbrace{\tt b_1,\ldots,b_m}^{\text{body}}}[p]
        \qquad\text{or}\qquad
            \method = 
            \meth{\overbrace{\tt h_1,\ldots, h_n}^{\text{multi-head}}}{\overbrace{\tt .\vphantom{\tt h}}^{\text{empty body}}}[p]
    \end{equation}

    A \defn{\probLO-method} of type $\vX^n\times\vY_1\times\cdots\times \vY_m$ is a multiset of simple \probLO-methods of type $\vX^n\times\vY_1\times\cdots\times \vY_m$.
    A {\probLO-method} is
    \begin{itemize}
        \item
        a (Boolean) \defn{\probLO-conditional} for the variable $\vX$ if it is made of two simple methods with the same body and sum of their probability being $1$:
        \begin{equation}
        \hskip-2em
        \adjustbox{max width=.92\textwidth}{$
            \methodC[\vX][b_1,\ldots, b_m] =
            \selmeth{
                \meth{\at{\vX},\ldots, \at{\vX}}{b_1,\ldots, b_m}[p_1],
                \\
                \meth{\af{\vX},\ldots, \af{\vX}}{b_1,\ldots, b_m}[p_2]
            }
            \text{ with }
            p_1 + p_2 = 1
            \text{ (and heads of the same length)}
            $}
        \end{equation}
        
        \item a \defn{\probLO-table} if it can be split into \probLO-conditionals, one for each possible body ${\tt [b_1,\ldots, b_m]}$ (where  each ${\tt b_i}$ is of type $\vY_i$ for $i\in\intset1m$). 
    \end{itemize}
    A \defn{(\probLO-)program} $\prog$ is a multiset of \probLO tables.
\end{definition}

\begin{remark}
    Even if we adopt the same way to annotate probabilities as in \ProbLog \cite{problog-1,problog-2}, we weight methods with probabilities, not only facts.
    Note also that our \probLO-conditionals do not support a constructor in which we can inject the choice directly in the head of a method.
    In fact, each method could be seen as a formula of the form $\bigoplus_{i=1}^k ((\pclr{\lneg h_i \ltens \cdots \ltens \lneg h_i}) \ltens (\nclr{b_1^i\lpar\cdots\lpar b_m^i}))$
    and each conditional could be seen as a formula $\pclr{\left(\bigoplus_{i=1}^2 (\lneg h_i \ltens \cdots \ltens \lneg h_i)\right)} \ltens (\nclr{b_1\lpar\cdots\lpar b_m})$, 
    which is similar to the way methods in \LPAD \cite{Vennekens2004LPAD} are defined.
    We opt to keep the syntax simple, and to avoid including an additional constructor in \probLO, in order to get simple methods as purely multiplicative formulas.
\end{remark}

The intuition behind a \probLO-table is that it represents a conditional probability table of a variable $\vX$ given some evidences $\vY_1,\ldots, \vY_m$.
Each \probLO-conditional in the \probLO-table represents the two possible outcomes of the variable $\vX$ ($\vTrue$ or $\vFalse$) given a specific assignment of values to the evidences $\vY_1,\ldots, \vY_m$ represented by the body of the methods in the conditional.
We later show when modeling Bayesian networks that, during the program execution, the current goal in the state allows it to uniquely choose a single simple method every time a \probLO-table is called (\Cref{lem:Boolean-consistency}).
At the same time, the use of \probLO-tables instead of multiple simple \probLO-modules allows for having a general-purpose  goal-independent program that can adapt to the given goal, without the need to hard-code the specific goal in the program itself.

In \probLO, we want to keep the same operational semantics of \LO, where the usage of each method prevents its reuse, but we also want to extend it to account for the probabilistic nature of the methods.
Therefore, in addition to keeping track of the current state of the program —i.e., its current goal and the currently available methods— we also want to keep track of the probability of reaching the current state from the program’s initial state.

To this aim, we represent a state of a \probLO program as a \defn{weighted sequent} of the form
$\state[p]{\prog}{\Gamma}$ 
with $p$ a probability value, $\prog$ a \probLO program and $\Gamma$ a multiset of atoms representing the current goal.
In \Cref{fig:OSprobLO} we show the rules of the operational semantics of \probLO.
Derivations in this system are built bottom-up and represent the possible executions of a \probLO program.
The probability value annotating the sequent root of a derivation has to be interpreted as the probability of successfully reaching the initial goal.
We may also annotate these rules with the method $\method$ applied by the rule instead of $\termr$ and $\expr$.
The rule $\branr$ is used to process goals containing additive conjunctions ($\lwith$) that represent uncertain events; it splits the current execution into two sub-executions, one for each branch of the conjunction, summing up their probabilities.

\begin{remark}
    If we forget about probability, we can easily see that each rule in the operational semantics of \probLO corresponds to a synthetic rule in $\MALLx$.
    In particular, the rules $\termr$ and $\expr$ correspond to the synthetic rules in \Cref{eq:syntheticDerivable}, while the rule $\branr$ corresponds to an instance of the rule $\lwith$ for the additive conjunction.
    Notice also that the `probabilistic' part of the inference is not in the choice of which simple method to select in the case of tables, but in the application of the simple method itself.
    In fact, the selection of the simple method is completely deterministic and driven by the current state of the program.
    Said differently, the probabilistic nature of \probLO is not in the choice of which method to apply, as if the connective $\lplus$ had a probabilistic nature (as in \cite{horne2019sub}), but in the success of the application itself.
\end{remark}

%%%%%%%%%%%%%%%%%%%%%%%%%%%%%%%%%%%%%%%%%%%%%%%%%%%%%%%%%%%%%%%%
% FIG OS probLO
%%%%%%%%%%%%%%%%%%%%%%%%%%%%%%%%%%%%%%%%%%%%%%%%%%%%%%%%%%%%%%%%
\begin{figure}[t]
    \centering
    \adjustbox{max width=.92\textwidth}{$
    \begin{array}{c@{\qquad}c@{\quad}l}
        \vlinf{\termr}{
            \dagger
        }{
            \state[p]{\methodT}{h_1,\ldots, h_n}
        }{}
    &
         \vliinf{\branr}{}{
            \state[p+q]{\prog}{\Gamma, A\lwith B}
        }{
            \state[p]{\prog}{\Gamma,A}
        }{
            \state[q]{\prog}{\Gamma,B}
        }
    \\[8pt]
        \vlinf{\dexpr}{
            \dagger
        }{
            \state[q\cdot p]{\prog,\methodT}{\Gamma,h_1,\ldots, h_n}
        }{
            \state[q]{\prog}{\Gamma}
        }
    &
        \vlinf{\expr}{
            \ddagger
        }{
            \state[q\cdot p]{\prog,\methodT}{\Gamma,h_1,\ldots, h_n}
        }{
            \state[q]{\prog}{\Gamma,b_1,\ldots, b_{m}}
        }
    \\[8pt]
        \dagger \coloneq \meth{h_1,\ldots, h_n}{.}[p] \in \methodT
        &
        \ddagger \coloneq \meth{h_1,\ldots, h_n}{b_1,\ldots, b_{m}}[p]\in\methodT
    \end{array}
    $}
    \caption{
        Rules of the operational semantics of \probLO. 
        Both rules $\termr$ and $\expr$ may instead be annotated by the method $\method$ applied by the rule.
    }
    \label{fig:OSprobLO}
\end{figure}
%%%%%%%%%%%%%%%%%%%%%%%%%%%%%%%%%%%%%%%%%%%%%%%%%%%%%%%%%%%%%%%%
%%%%%%%%%%%%%%%%%%%%%%%%%%%%%%%%%%%%%%%%%%%%%%%%%%%%%%%%%%%%%%%%

%%%%%%%%%%%%%%%%%%%%%%%%%%%%%%%%%%%%%%%%%%%%%%%%%%%%%%%%%%%%%%%%
%%%%%%%%%%%%%%%%%%%%%%%%%%%%%%%%%%%%%%%%%%%%%%%%%%%%%%%%%%%%%%%%
\subsection{Encoding Bayesian Networks in \probLO}\label{ssec:bn-in-probLO}
%%%%%%%%%%%%%%%%%%%%%%%%%%%%%%%%%%%%%%%%%%%%%%%%%%%%%%%%%%%%%%%%
%%%%%%%%%%%%%%%%%%%%%%%%%%%%%%%%%%%%%%%%%%%%%%%%%%%%%%%%%%%%%%%%
We now show how to encode Bayesian networks into \probLO programs as one of the possible applications, using \probLO-tables to model probabilistic choices.
At the same time, we use additive connectives in the goal to represent the uncertainty on the values of Boolean variables: if the value of a Boolean variable $\vX$ is unknown, we represent this as a  `superposition'  of its possible values.

\begin{notation}
    From now on, we assume that the set of atoms $\atomSet$  always includes a pair of atoms, $\at{\vX}$ and $\af{\vX}$, for each Boolean variable $\vX$ we may consider, representing the events $\vX=\vTrue$ and $\vX=\vFalse$ in $\sX$, respectively.
    To simplify the notation, we denote by $\av{\vX}$ an element in $\set{\at{\vX},\af{\vX}}$, assuming that the corresponding $x\in\sX$ is specified by the context. 
\end{notation}

%%%%%%%%%%%%%%%%%%%%%%%%%%%%%%%%%%%%%%%%%%%%%%%%%%%%%%%%%%%%%%%%
\begin{definition}[Encoding Bayesian Networks in \probLO]\label{def:encoding}
    Let $\bn=\genbn$ be a Bayesian network.
    We define the following for each variable $\vX\in\vof$ with parents $\set{\vY_1,\dots,\vY_h}$ and $m$ children:
    \begin{enumerate}
        \item 
        the simple \probLO-method of type $\sX\prodash\sY_1\times\cdots\times\sY_h$  that is associated
        to the row of the conditional probability table of $\vX$ for the event $\tuple{x,y_1,\ldots, y_h}\in \sX\times\sY_1\times\cdots\times\sY_h$:
        \begin{equation}\label{eq:methodBNrow}
            \metoff{\vX}{x,y_1,\ldots, y_h}
            \;=\;
            \meth{\av{\vX},\ldots, \av{\vX}}{\av{\vY_1},\ldots, \av{\vY_h}} [{\prcond{x}{y_1,\ldots, y_h}}]
        \end{equation}
        where the multi-head contains $m+1$ copies of $\av{\vX}$,
        and where $\av\vZ$ is 
        either the atom $\at\vZ$ (if $\rv[Z]=\vTrue$)
        or $\af\vZ$ (if $\rv[Z]=\vFalse$) for each $\vZ\in\set{\vX,\vY_1,\ldots,\vY_m}$;
        
        \item 
        the \probLO-table that is associated to the conditional probability table of the variable $\vX$:
        \begin{equation}\label{eq:methodTable}
            \hskip-1em
            \adjustbox{max width=.9\textwidth}{$
            \metof{\vX}
            =
            \selmeth{
                \metoff{\vX}{x^1,y_1^1,\ldots, y_h^1},
                \\\vdots\\
                \metoff{\vX}{x^{2^h},y_1^{2^h},\ldots, y_h^{2^h}}
            }
            \text{ where }
            \set{\tuple{x^i,y_1^i,\ldots, y_h^i}\mid i\in\intset1{2^{h+1}}}= \sX\times \sY_1\times\cdots\times\sY_h
            $}
        \end{equation}

    \end{enumerate}
    
    The program $\prog_{\bn}$ associated to $\bn$ is the multiset of tables containing the table $\metof{\vX}$ for each variable $\vX\in\vof$.
\end{definition}
%%%%%%%%%%%%%%%%%%%%%%%%%%%%%%%%%%%%%%%%%%%%%%%%%%%%%%%%%%%%%%%%

%%%%%%%%%%%%%%%%%%%%%%%%%%%%%%%%%%%%%%%%%%%%%%%%%%%%%%%%%%%%%%%%
\begin{example}\label{ex:bayesian-methods}
    The program associated to the Bayesian network in \Cref{fig:BN1} is the 
    multiset of tables
    $\selmeth{\metof{\vC},\metof{\vW},\metof{\vT},\metof{\vS},\metof{\vR}}$
    where each table is defined in \Cref{fig:BNrowMethods}.
\end{example}
%%%%%%%%%%%%%%%%%%%%%%%%%%%%%%%%%%%%%%%%%%%%%%%%%%%%%%%%%%%%%%%%

%%%%%%%%%%%%%%%%%%%%%%%%%%%%%%%%%%%%%%%%%%%%%%%%%%%%%%%%%%%%%%%%
% FIG slices of example
%%%%%%%%%%%%%%%%%%%%%%%%%%%%%%%%%%%%%%%%%%%%%%%%%%%%%%%%%%%%%%%%
\begin{figure}[t]
    \centering
    \adjustbox{max width=.75\textwidth}{$\begin{array}{l|@{\quad}l|c|c}
        \text{\bf Conditional probability} & \text{\bf \probLO Method} & \text{\bf Conditional} & \text{\bf Table} 
        \\
        \hline\hline&&
    \\[-9pt]
        \prof{\vC=\vTrue}=0.5                                                           &
        \metoff{\vC}{\vC\isT} = \meth{\at \vC,\at \vC, \at\vC}{.}[0.5]                  &
        \multirow{ 2}{*}{$\methodC[\vC]$}                                               &
        \multirow{ 2}{*}{$\methodT[\vC]$}
    \\[1pt]
        \prof{\vC=\vFalse}=0.5                                                          & 
        \metoff{\vC}{\vC\isF} = \meth{\af \vC, \af\vC, \af\vC}{.}[0.5]                  &
    \\[1pt]\hline\\[-9pt]
        \prcond{\vR=\vTrue}{\vC=\vFalse}=0.2                                            & 
        \metoff{\vR}{\vR\isT,\vC\isF} = \meth{\at \vR, \at\vR,\at\vR}{\af\vC}[0.2]      &
        \multirow{ 2}{*}{$\methodC[\vR][\af\vC]$}                                       &
        \multirow{ 4}{*}{$\methodT[\vR]$}
    \\[1pt]
        \prcond{\vR=\vFalse}{\vC=\vFalse}=0.8                                           & 
        \metoff{\vR}{\vR\isF,\vC\isF} = \meth{\af \vR, \af\vR,\af\vR}{\af\vC}[0.8]      &
    \\[1pt]\cline{1-3}\\[-9pt]
        \prcond{\vR=\vTrue}{\vC=\vTrue}=0.8                                             & 
        \metoff{\vR}{\vR\isT,\vC\isT} = \meth{\at \vR, \at\vR,\at\vR}{\at\vC}[0.8]      &
        \multirow{ 2}{*}{$\methodC[\vR][\at\vC]$}                                       &
    \\[1pt]
        \prcond{\vR=\vFalse}{\vC=\vTrue}=0.2                                            & 
        \metoff{\vR}{\vR\isF,\vC\isT} = \meth{\af \vR, \af\vR,\af\vR}{\at\vC}[0.2]      &
    \\[1pt]\hline \\[-9pt]
        \prcond{\vS=\vTrue}{\vC=\vFalse}=0.5                                            & 
        \metoff{\vS}{\vS\isT,\vC\isF} = \meth{\at \vS,\at \vS}{\af\vC}[0.5]             &
        \multirow{ 2}{*}{$\methodC[\vS][\af\vC]$}                                       &
        \multirow{ 4}{*}{$\methodT[\vS]$}
    \\[1pt]
        \prcond{\vS=\vFalse}{\vC=\vFalse}=0.5                                           & 
        \metoff{\vS}{\vS\isF,\vC\isF} = \meth{\af \vS,\af \vS}{\af\vC}[0.5]             &
    \\[1pt]\cline{1-3}\\[-9pt]
        \prcond{\vS=\vTrue}{\vC=\vTrue}=0.1                                             & 
        \metoff{\vS}{\vS\isT,\vC\isT} = \meth{\at \vS,\at \vS}{\at\vC}[0.1]             &
        \multirow{ 2}{*}{$\methodC[\vS][\at\vC]$}                                       &
    \\[1pt]
        \prcond{\vS=\vFalse}{\vC=\vTrue}=0.9                                            & 
        \metoff{\vS}{\vS\isF,\vC\isT} = \meth{\af \vS,\af \vS}{\at\vC}[0.9]             &
    \\[1pt]\hline\\[-9pt]
        \prcond{\vW=\vTrue}{\vR=\vFalse, \vS=\vFalse}=0                                 & 
        \metoff{\vW}{\vW\isT,\vR\isF,\vS\isF} = \meth{\at \vW}{\af\vS,\af\vR}[0]&
        \multirow{ 2}{*}{$\methodC[\vW][\af\vR,\af\vS]$}                                &
        \multirow{ 8}{*}{$\methodT[\vW]$}
    \\[1pt]
        \prcond{\vW=\vFalse}{\vR=\vFalse, \vS=\vFalse}=1                                & 
        \metoff{\vW}{\vW\isF,\vR\isF,\vS\isF} = \meth{\af \vW}{\af\vS,\af\vR}[1]&
    \\[1pt]\cline{1-3}\\[-9pt]
        \prcond{\vW=\vTrue}{\vR=\vFalse,\vS=\vTrue}=0.9                                     & 
        \metoff{\vW}{\vW\isT,\vR\isF,\vS\isT} = \meth{\at \vW}{\at\vS,\af\vR}[0.9]  &
        \multirow{ 2}{*}{$\methodC[\vW][\af\vR,\at\vS]$}                                    &
    \\[1pt]
        \prcond{\vW=\vFalse}{\vR=\vFalse,\vS=\vTrue}=0.1                                    & 
        \metoff{\vW}{\vW\isF,\vR\isF,\vS\isT} = \meth{\af \vW}{\at\vS,\af\vR}[0.1]  &
    \\[1pt]\cline{1-3}\\[-9pt]
        \prcond{\vW=\vTrue}{\vR=\vTrue,\vS=\vFalse}=0.9                                     & 
        \metoff{\vW}{\vW\isT,\vR\isT,\vS\isF} = \meth{\at \vW}{\af\vS,\at\vR}[0.9]  &
        \multirow{ 2}{*}{$\methodC[\vW][\at\vR,\af\vS]$}                                    &
    \\[1pt]
        \prcond{\vW=\vFalse}{\vR=\vTrue,\vS=\vFalse}=0.1                                    & 
        \metoff{\vW}{\vW\isF,\vR\isT,\vS\isF} = \meth{\af \vW}{\af\vS,\at\vR}[0.1]  &
    \\[1pt]\cline{1-3}\\[-9pt]
        \prcond{\vW=\vTrue}{\vR=\vTrue, \vS=\vTrue}=0.99                                    & 
        \metoff{\vW}{\vW\isT,\vR\isT,\vS\isT} = \meth{\at \vW}{\at\vS,\at\vR}[0.99] &
        \multirow{ 2}{*}{$\methodC[\vW][\at\vR,\af\vS]$}                                    &
    \\[1pt]
        \prcond{\vW=\vFalse}{\vR=\vTrue,\vS=\vTrue}=0.01                                    & 
        \metoff{\vW}{\vW\isF,\vR\isT,\vS\isT} = \meth{\af \vW}{\at\vS,\at\vR}[0.01] &
    \\[1pt]\hline\\[-9pt]
        \prcond{\vT=\vTrue}{\vR=\vFalse}=0.9                                                & 
        \metoff{\vT}{\vT\isT,\vR\isF} = \meth{\at \vT}{\af\vR}[0.9]                 &
        \multirow{ 2}{*}{$\methodC[\vT][\af\vR]$}                                           &
        \multirow{ 4}{*}{$\methodT[\vT]$}
    \\[1pt]
        \prcond{\vT=\vFalse}{\vR=\vFalse}=0.1                                               & 
        \metoff{\vT}{\vT\isF,\vR\isF} = \meth{\af \vT}{\af\vR}[0.1]                 &
    \\[1pt]\cline{1-3}\\[-9pt]
        \prcond{\vT=\vTrue}{\vR=\vTrue}=0.7                                                 & 
        \metoff{\vT}{\vT\isT,\vR\isT} = \meth{\at \vT}{\at\vR}[0.7]                 &
        \multirow{ 2}{*}{$\methodC[\vT][\at\vR]$}                                           &
    \\[1pt]
        \prcond{\vT=\vFalse}{\vR=\vTrue}=0.3                                                & 
        \metoff{\vT}{\vT\isF,\vR\isT} = \meth{\af \vT}{\at\vR}[0.3]                 &
    \end{array}$}
    \caption{
        Conditional probabilities for variables of the Bayesian network of \Cref{fig:BN1} and the corresponding \probLO methods.
        Rows are grouped into conditionals and tables according to the third and fourth columns.
    }
    \label{fig:BNrowMethods}
\end{figure}
%%%%%%%%%%%%%%%%%%%%%%%%%%%%%%%%%%%%%%%%%%%%%%%%%%%%%%%%%%%%%%%%
%%%%%%%%%%%%%%%%%%%%%%%%%%%%%%%%%%%%%%%%%%%%%%%%%%%%%%%%%%%%%%%%

In the encoding of a Bayesian network $\bn$ into a \probLO-program $\prog_{\bn}$, as in \Cref{def:encoding}, each \probLO-table $\metof{\vX}$ is the encoding of the entire conditional probability table of the variable $\vX$ in $\bn$.
However, although the interpretation of the method in $\MALLx$  involves a synthetic rule that includes a non-deterministic choice enforced by the connective $\lplus$ (see, \Cref{eq:syntheticDerivable}), this choice can be made fully deterministic in each possible execution of the program $\prog_{\bn}$ by introducing suitable constraints into the program's initial goal.
We can observe indeed that, if a derivation that uses a program $\prog_{\bn}$ with a goal containing an atom $\av{\vX}\in\set{\at\vX,\af\vX}$, for some variable $\vX$ in $\vof$, has a state containing both atoms $\at{\vX}$ and $\af{\vX}$ then, this derivation cannot be successful anymore.
We refer to this property as the \emph{Boolean consistency} in the following.

%%%%%%%%%%%%%%%%%%%%%%%%%%%%%%%%%%%%%%%%%%%%%%%%%%%%%%%%%%%%%%%%
\begin{restatable}
    % [Boolean consistency]
    {lemma}{BooleanConsistency}
    \label{lem:Boolean-consistency}
    Let $\bn$ be a Bayesian network and $\Gamma$ a goal containing exactly one formula in $\set{\at{\vX},\af{\vX},\at{\vX}\lwith \af{\vX}}$ for each variable $\vX$ in $\bn$.
    Then, each branch of a successful derivation in the operational semantics of \probLO starting from a state with goal $\Gamma$, never contains both atoms $\at{\vX}$ and $\af{\vX}$.
\end{restatable}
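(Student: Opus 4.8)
The plan is to establish, along each branch, a statement slightly sharper than the claim: for every variable $\vX$ of $\bn$ --- say with $m$ children $\vZ_1,\dots,\vZ_m$ in $\bn$ --- and every root-to-leaf path $\rho$ of a successful derivation of $\prog_{\bn}$ from a state with goal $\Gamma$, \emph{all} atoms of type $\vX$ occurring in the goals along $\rho$ have the same polarity (all equal to $\at{\vX}$, or all equal to $\af{\vX}$). Since a branch is a root-to-leaf path and each of its states is one of those goals, this immediately yields the lemma.

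First I would record two structural facts. \textbf{(i) On any path, each table of $\prog_{\bn}$ is used exactly once.} As $\branr$ is the only branching rule of \Cref{fig:OSprobLO} and it carries the \emph{same} program into both premises, along $\rho$ the program loses exactly one table at each $\dexpr$/$\expr$ step, is unchanged at each $\branr$ step, and terminates at a single-table program (forced by $\termr$, the only leaf rule). Hence each table is used at most once (a used table is discarded) and at least once: were the table $\metof{\vW}$ of some $\vW\in\vof$ never used, the leaf program would be either $\set{\metof{\vW}}$ with $\vW$ a root --- but then the $\termr$-step \emph{is} a use of $\metof{\vW}$ --- or a program with at least two tables, to which no rule applies, contradicting successfulness. \textbf{(ii) Sources and sinks of type-$\vX$ atoms.} By \Cref{def:encoding}, the only table of $\prog_{\bn}$ whose methods carry multi-heads of type $\vX$ is $\metof{\vX}$, and each simple method of $\metof{\vX}$ has all its $m{+}1$ head atoms equal to one fixed $\av{\vX}$, with a body consisting of exactly one atom per parent of $\vX$. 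Since $\bn$ is acyclic, $\vX\notin\parof{\vX}$, so no method body in $\prog_{\bn}$ mentions a type-$\vX$ atom unless the method belongs to $\metof{\vZ_i}$ for some child $\vZ_i$ of $\vX$, in which case --- $\parof{\vZ_i}$ being a set --- that body contains exactly one type-$\vX$ atom.

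Now fix $\vX$ and a path $\rho$, and count the type-$\vX$ atoms \emph{introduced} into goals along $\rho$. Using (ii) and the hypothesis that $\Gamma$ holds exactly one formula of $\set{\at{\vX},\af{\vX},\at{\vX}\lwith\af{\vX}}$, the only introduction events on $\rho$ are: one atom from the initial goal (present directly, or emitted by the unique $\branr$ decomposing $\at{\vX}\lwith\af{\vX}$, which must occur on $\rho$ since $\termr$ leaves only atoms in the goal and no rule ever creates a $\lwith$); and exactly one atom from each of the $m$ applications of the children's tables $\metof{\vZ_1},\dots,\metof{\vZ_m}$, which are applied by $\expr$ as their bodies are nonempty. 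So $m{+}1$ atoms are introduced in all. Dually, a type-$\vX$ atom can be removed only as part of a method multi-head, hence by (i)--(ii) only at the \emph{single} application of $\metof{\vX}$ on $\rho$, which consumes exactly $m{+}1$ copies of one fixed $\av{\vX}$; and the $\termr$-goal at the leaf, being the multi-head of a root table, either coincides with that application (when the root is $\vX$) or contains no type-$\vX$ atom. Therefore every type-$\vX$ atom introduced along $\rho$ is eventually consumed by that one application of $\metof{\vX}$, so the multiset of introduced type-$\vX$ atoms is included in --- and, both having cardinality $m{+}1$, equal to --- the multiset of $m{+}1$ copies of $\av{\vX}$. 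Hence every type-$\vX$ atom occurring along $\rho$ equals $\av{\vX}$, which proves the sharper statement, and with it the lemma.

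I expect the main work to be fact (i) --- the ``one table per step, program shared across $\branr$, single table at the leaf'' bookkeeping --- together with pinning down (ii) that $\metof{\vX}$ is the sole sink and the children's tables (plus the initial goal) the sole sources of type-$\vX$ atoms; once these are settled, the conclusion is a pure conservation count, which can also be packaged as the invariance along $\rho$ of the quantity ``(number of type-$\vX$ atoms currently in the goal, a pending $\at{\vX}\lwith\af{\vX}$ counted as $1$) $+$ (number of still-unused children tables of $\vX$) $+$ $(m{+}1)$ if $\metof{\vX}$ has already been used'', which stays equal to $m{+}1$ throughout.
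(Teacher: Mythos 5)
Your proof is correct, and it takes a genuinely different (and more self-contained) route than the paper's. The paper argues by contradiction via \emph{starvation}: it observes that each simple method of $\metof{\vX}$ has a uniform-polarity body, and claims that introducing a complementary literal for some $\vZ$ would leave a precondition forever unsatisfiable, so the proof search fails ``as a consequence of \Cref{thm:dagLO}''; the details of why the leftover atom can never be consumed are left implicit. You instead prove a sharper positive invariant by a conservation count: each table is used exactly once per branch (your fact (i), which the paper never states explicitly but tacitly relies on), exactly $m+1$ type-$\vX$ atoms are ever introduced along a branch (one from $\Gamma$ or its $\branr$-decomposition, one from each child's table body), and the unique application of $\metof{\vX}$ is the only sink and consumes $m+1$ copies of a single fixed $\av{\vX}$ --- forcing the introduced and consumed multisets to coincide. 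This buys you a cleaner argument that does not route through \Cref{thm:dagLO} or through a somewhat informal starvation claim, and the closing invariant (goal count plus unused children tables plus a $(m{+}1)$-credit once $\metof{\vX}$ fires) would make a nice explicit loop invariant for the optimization discussed in \Cref{rem:boolean-consistency}. The only cosmetic slip is in fact (i): the case ``leaf program $=\set{\metof{\vW}}$ with $\vW$ a root'' should be split from ``$\vW$ not a root'' (where $\termr$ simply cannot fire), but both subcases yield the needed contradiction, so nothing is lost.
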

\begin{proof}
    By definition of the program $\prog_{\bn}$ as in \Cref{def:encoding}, each simple method in the \probLO-table $\metof{\vX}$ for a variable $\vX$ only contains either $\at{\vZ}$ or $\af{\vZ}$ in the body of its simple methods, never both.
    If a method $\metof{\vX}$ applied in the derivation would introduce the `complementary' literal for a variable $\vZ$ such that $\af\vZ$ \resp{$\at\vZ$} is already present in the current goal, then, some method will eventually go in starvation because the required precondition will never be satisfied. Therefore, as a consequence of \Cref{thm:dagLO}, the proof search will fail.
\end{proof}
%%%%%%%%%%%%%%%%%%%%%%%%%%%%%%%%%%%%%%%%%%%%%%%%%%%%%%%%%%%%%%%%

We can finally prove that, given the encoding of a Bayesian network $\bn$ into a \probLO program $\prog_{\bn}$, the operational semantics of \probLO in \Cref{fig:OSprobLO} correctly computes both joint and marginal probabilities over the variables of $\bn$.

\begin{restatable}[Computing \probLO queries]{theorem}{probLOqueries}\label{thm:BNcomp} 
    Let $\bn=\genbn$ be a (Boolean) Bayesian network with variables $V=\set{\vX_1,\ldots,\vX_n}$.
    Then:
    \begin{enumerate}
        \item\label{point:joint}
        the state $\state[p]{\prog_{\bn}}{\av{\vX_1},\ldots, \av{\vX_n}}$ is derivable in the operational semantics of \probLO 
        iff
        $p=\prof{x_1\ldots,x_n}$ is the joint probability of all variables in $\bn$;
            
        \item\label{point:marginal}
        the state $\state[p]{\prog_{\bn}}{\av{\vX_1},\ldots, \av{\vX_m}, 
        (\at{\vX_{m+1}}\lwith \af{\vY_{m+1}}),\ldots,
        (\at{\vX_n}\lwith \af{\vX_n})}$ is derivable in the operational semantics of \probLO 
        iff 
        $p=\prof{X_1\ldots,X_m} $ is the marginal probability of the variables $\vX_1,\ldots,\vX_m$ in $\bn$ with respect to the (marginal) variables $\vX_{m+1},\ldots,\vX_n$.
    \end{enumerate}
\end{restatable}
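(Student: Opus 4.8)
The plan is to prove Part~\ref{point:joint} directly, and then to derive Part~\ref{point:marginal} from it by permuting all $\branr$ rules to the bottom of a derivation. For Part~\ref{point:joint}, the starting point is that a \probLO-derivation of the state $\state{\prog_{\bn}}{\av{\vX_1},\ldots,\av{\vX_n}}$ mirrors the graph traversal of \Cref{thm:dagLO}: by \Cref{def:encoding} the head of the table $\metof{\vX}$ carries $m+1$ copies of an atom of type $\vX$ (one coming from the goal, one from each of the $m$ children of $\vX$) and its body carries one atom per parent of $\vX$, exactly as in \Cref{eq:nodeToMLL}, while the function of $\mixr$ in \Cref{thm:dagLO} is absorbed by $\dexpr$. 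For the $(\Leftarrow)$ direction I fix a topological enumeration of the vertices (leaves first, roots last; this exists because the underlying graph is acyclic) and process them in this order: when $\vX$ is reached the current goal contains exactly the $m+1$ copies of $\av{\vX}$ required by $\expr$ --- or by $\dexpr$/$\termr$ when $\vX$ is a root --- and I apply the simple method $\metoff{\vX}{x,y_1,\ldots,y_h}$ whose head and body agree with the events assigned by the initial goal to $\vX$ and to its parents. The probability accumulated along the resulting derivation is $\prod_{\vX\in\vof}\prcond{x}{\parof{\vX}}$, which equals $\prof{x_1,\ldots,x_n}$ by the Factorization Theorem (\Cref{thm:BNfactorization}).

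For the $(\Rightarrow)$ direction of Part~\ref{point:joint}, I would show that every successful derivation carries exactly this value. Since the initial goal fixes one of $\at{\vX},\af{\vX}$ for each variable, \Cref{lem:Boolean-consistency} forbids any reachable state from containing both; this makes the choice of simple method deterministic at every table application --- the head atoms present determine the event $x$, and Boolean consistency, together with the atom of type $\vY_j$ committed by the initial goal (which can only ever be consumed by $\metof{\vY_j}$, and on a terminating run this happens after $\metof{\vX}$), forces every body atom --- so the applied simple method is again $\metoff{\vX}{x,y_1,\ldots,y_h}$ with the events fixed by the goal. Hence, stripped of probabilities, the derivation is a derivation of $\encof{\gG},a_{v_1},\ldots,a_{v_n}$ of the kind in \Cref{thm:dagLO}; in particular it uses each of the $n$ tables exactly once and terminates, so its root annotation is the product $\prod_{\vX\in\vof}\prcond{x}{\parof{\vX}}=\prof{x_1,\ldots,x_n}$, independently of the order of application.

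For Part~\ref{point:marginal}, I would first prove a normal-form lemma: in a successful derivation every $\branr$ rule can be permuted below every $\expr$/$\dexpr$ rule, preserving both the conclusion and its probability annotation. The only move needed is the local swap of a $\branr$ with a method directly beneath it: the additive formula is inert under a method application (method bodies are atoms), so the swap merely duplicates that method into the two branches of the $\branr$, and the annotation survives because $(v_1+v_2)\cdot p = v_1 p + v_2 p$; the measure ``total number of method applications occurring below a $\branr$'' strictly decreases at each swap, which yields termination. Since no method creates an additive formula, a normalised derivation of the state in Part~\ref{point:marginal} is a complete binary tree of $\branr$ applications of depth $n-m$ --- whose $2^{n-m}$ leaves enumerate exactly the completions $(x_{m+1},\ldots,x_n)$ of the fixed assignment $x_1,\ldots,x_m$ --- carrying, above each leaf, a method-only subderivation whose conclusion is the joint-probability query $\state{\prog_{\bn}}{\av{\vX_1},\ldots,\av{\vX_n}}$ for that completion and whose root annotation is, by Part~\ref{point:joint}, $\prof{x_1,\ldots,x_n}$. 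Since no multiplication takes place inside the $\branr$-tree, the overall root annotation equals $\sum_{(x_{m+1},\ldots,x_n)}\prof{x_1,\ldots,x_n}=\prof{x_1,\ldots,x_m}$, the required marginal; both directions of the equivalence follow --- for $(\Leftarrow)$ one assembles the $\branr$-tree together with the subderivations supplied by Part~\ref{point:joint}.

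The step I expect to be the main obstacle is the $(\Rightarrow)$ direction of Part~\ref{point:joint}: making rigorous the picture that a successful derivation performs the leaves-to-roots traversal of \Cref{thm:dagLO}, consuming each conditional-probability table exactly once along a uniquely determined row. This requires carefully matching the multiplicities of the typed atoms $\av{\vX}$ against their producers (the bodies of the children's tables, plus the goal) and their unique consumer $\metof{\vX}$, and combining \Cref{thm:dagLO} (no deadlock / termination), \Cref{lem:Boolean-consistency} (deterministic row selection), and \Cref{thm:BNfactorization} (the product of the selected rows is the joint probability). The permutation argument underlying Part~\ref{point:marginal} is routine, but must be carried out carefully enough to verify that the probability annotations pass through unchanged.
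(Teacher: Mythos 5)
Your proposal is correct and follows essentially the same route as the paper: Part 1 is the leaves-to-roots traversal of \Cref{thm:dagLO} with the row of each table forced by \Cref{lem:Boolean-consistency} and the resulting product identified with the joint probability via \Cref{thm:BNfactorization}, while Part 2 rests on permuting $\branr$ below method applications (the paper organizes this as an induction on the number of marginal variables, peeling off one $\branr$ at the root via reversibility of $\lwith$, rather than normalizing all branchings at once and counting the $2^{n-m}$ leaves). Your write-up is somewhat more explicit than the paper's in exactly the two places the paper glosses over --- the $(\Rightarrow)$ direction of Part 1, where the paper only exhibits the construction and leaves determinism implicit, and the check that the probability annotation survives the $\branr$ permutation by distributivity --- so the added care is welcome but does not change the argument.
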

\begin{proof}  
    We prove the two points separately.
    \begin{enumerate}
        \item\label{BNcomp:1}
        Following the same construction used in the proof of \Cref{thm:dagLO}, we can build a derivation of a state $\state[p]{\prog_{\bn}}{\av{\vX_1},\ldots, \av{\vX_n}}$ for some $p$ by progressively visiting the nodes of the Bayesian network from leaves to roots, by applying the methods corresponding to each conditional probability in the factorization.
        In doing so, we make crucial use of \Cref{lem:Boolean-consistency} to extract the appropriate simple method from each \probLO-table: 
        each atom $\av{\vX_i}$ in the goal acts as a `filter' that only allows the application of the simple method corresponding to the correct row in the conditional probability table of $\vX$. See \Cref{rem:boolean-consistency}.
        Notice that the derivation we have built has no branching, and that the probability $p$ of the state at the bottom of the derivation is the product of the probabilities of each method application.
        We can then conclude that $p$ is the product of the conditional probability $\prcond{x_i}{y_1,\ldots, y_{h_i}}$ of $x_i\in\sX$ given the values of its parents $y_1,\ldots, y_{h_i}$ in the Bayesian network. Since, by the factorization theorem (\Cref{thm:BNfactorization}), the joint probability of the events $\prof{x_1,\ldots,x_n}$ can be written as the product of the conditional probabilities of each event variable given its parents, we conclude that $p$ is equal to $\prof{x_1\ldots,x_n}$.

        \item\label{BNcomp:2}
        The proof uses a similar argument as in the previous case, but now we have to deal with the sums that occur in the computation of the marginal probability (see \Cref{def:probabillities}).
        We proceed by induction on the number $k$ of marginal variables.
        
        If $k=0$, then $p$ is a joint probability, and we conclude by \Cref{BNcomp:1}.
        If $k>0$, we proceed, equivalently, by induction on the number $k$ of additive conjunctions in the goal.
        We can assume, without loss of generality that the derivation $\dD$, with $k=n-m$ marginal variables, starts with a $\branr$ rule as below, where $\Delta=\av{\vX_1},\ldots, \av{\vX_m},(\at{\vX_{m+1}}\lwith \af{\vX_{m+1}}),\ldots,(\at{\vY_{n-1}}\lwith \af{\vY_{n-1}})$:
        $$
        \vlderivation{
            \vliin{\branr}{}{
                \state[
                    p_1+p_2
                ]{
                    \prog_{\bn}
                }{
                    \Delta,
                    (\at{\vX_n}\lwith \af{\vX_n}),
                }
            }{
                \vlpr{\dD_1}{}{
                    \state[
                        p_1
                    ]{
                        \prog_{\bn}
                    }{
                        \Delta,
                        \at{\vX_n}
                    }
                }
            }{\vlpr{\dD_2}{}{
                    \state[
                        p_2
                    ]{
                        \prog_{\bn}
                    }{
                        \Delta,
                        \af{\vX_n}
                    }
                }
            }
        }
        $$
        This follows from the reversibility of the rule $\lwith$, which allows us to assume that the first rule applied in the derivation is a $\branr$ on the last additive conjunction in the goal (see \cite{Maieli22} for reference).
        By inductive hypothesis (the number of additive conjunctions in $\dD_1$ \resp{$\dD_2$} is strictly smaller than $k$) the probability $p_1$ \resp{$p_2$} in the conclusion of $\dD_1$ \resp{$\dD_2$} is the marginal probability of $\vX_1,\ldots,\vX_{m},\vX_n$  with respect to the variables $\vX_{m+1},\ldots,\vX_{n-1}$ (with the variable $\vX_n$ being equal to $\vTrue$ \resp{$\vFalse$}):
        $$
        \begin{array}{l}
            p_1=\sum_{x_{m+1}\in\sX_{m+1},\ldots, x_{n-1}\in\sX_{n-1}}\prof{x_1,\ldots,x_m,x_{m+1},\ldots,x_{n-1},\vX_n=\vTrue}
            \\
            p_2=\sum_{x_{m+1}\in\sX_{m+1},\ldots, x_{n-1}\in\sX_{n-1}}\prof{x_1,\ldots,x_m,x_{m+1},\ldots,x_{n-1},\vX_n=\vFalse}
        \end{array}
        $$
        We conclude, since, by definition, 
        $$p_1 + p_2 = \sum_{x_{m+1}\in\sX_{m+1},\ldots, x_n\in\sX_n}\prof{x_1,\ldots,x_{m},x_{m+1},\ldots,x_{n}}$$
    \end{enumerate}
\end{proof}

\begin{remark}\label{rem:boolean-consistency}
    It is possible to optimize the execution of \probLO programs encoding Bayesian networks to avoid backtracking due to the (wrong) choice of simple methods in a \probLO-table.
    In fact, by \Cref{lem:Boolean-consistency}, we know that each successful derivation cannot contain both atoms $\at{\vX}$ and $\af{\vX}$ for any variable $\vX$ in $\vof$.
    Therefore, we can statically analyze the initial goal of the program to determine, for each variable $\vX$, which of the two atoms $\at{\vX}$ or $\af{\vX}$; if both are present, then it is because the state contains $\at{\vX}\lwith\af{\vX}$ in the goal, and a rule $\branr$ will be applied to branch the execution to treat the two cases separately.
    This analysis allows us to determine, for each \probLO-table $\metof{\vX}$, which simple method $\metoff{\vX}{x,y_1,\ldots,y_h}$ can be applied in each branch of the derivation as soon as this information is complete for a variable $\vX$, that is, if we know which of the two atoms $\at{\vY}$ or $\af{\vY}$ will be present for $\vX$ and all its parents and descendants.
    At the same time, to optimize the computation, we can delay the application of the rule $\branr$ on $\at{\vX}\lwith\af{\vX}$ until the information about $\vX$ is strictly necessary to avoid unnecessary duplications of the state via branching. For example, in  \Cref{fig:computationBN} we could have start the execution by applying $\branr$, but this would have caused an unnecessary duplication of applications of the $\metof{\vT}$ in both branches.
\end{remark}

\begin{corollary}
    Computing joint and marginal probabilities in \probLO costs at most as much as computing them in the standard way directly on the Bayesian network $\bn$ i.e.,
    $\mathcal O (n)$ for joint probabilities and $\mathcal O(n2^k)$ for marginal probabilities, with $n$ the number of variables in $\bn$ and $k$ the number of marginal variables.
\end{corollary}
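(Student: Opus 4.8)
The plan is to read the two complexity bounds off the shape of the derivations constructed in the proof of \Cref{thm:BNcomp}, working — exactly as one does when stating the $\mathcal{O}(n)$ and $\mathcal{O}(n2^k)$ bounds for Bayesian networks — in the unit-cost model, where each arithmetic operation on a probability value (one multiplication, one addition) costs $\mathcal{O}(1)$ and each lookup of a row of a conditional probability table costs $\mathcal{O}(1)$. Under this model I also recall that the ``standard way'' of computing joint probabilities (evaluating the factorization $\prod_{\vX\in\vof}\prcond{\vX}{\parof{\vX}}$ of \Cref{thm:BNfactorization}) costs $\mathcal{O}(n)$, and the standard way of computing a marginal (summing $2^k$ such joint probabilities over the assignments of the marginal variables) costs $\mathcal{O}(n2^k)$; so it suffices to show the \probLO computation stays within these same bounds.

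For the joint case, I would recall from the first item of the proof of \Cref{thm:BNcomp} that a derivation of $\state[p]{\prog_{\bn}}{\av{\vX_1},\ldots,\av{\vX_n}}$ is built just as in the proof of \Cref{thm:dagLO}: the proof search visits the $n$ variables of $\bn$ from leaves to roots, firing for each variable $\vX$ exactly one instance of $\expr$ (or $\dexpr$, at a root) associated with the table $\metof{\vX}$, and no $\branr$ is ever used. Each table is used at most once, the expansion rules are reversible, so the search does not backtrack on the multiplicative/structural steps; and by \Cref{lem:Boolean-consistency} together with \Cref{rem:boolean-consistency}, the atoms already occurring in the goal force a unique choice of simple method inside each table, so that choice is made deterministically and adds nothing. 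Hence the derivation consists of exactly $n$ rule applications, each one table lookup followed by one multiplication, for a total cost of $\mathcal{O}(n)$, which matches the standard cost.

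For the marginal case, I would recall from the second item of the proof of \Cref{thm:BNcomp} that a goal carrying $k$ additive conjunctions $\at{\vX}\lwith\af{\vX}$ is handled by applying, for each of them, one instance of $\branr$, which splits the current branch in two; the routine induction on $k$ carried out there shows the derivation is a binary tree with $2^k$ leaves and $2^k-1$ internal $\branr$ nodes, each leaf terminating a branch that is itself a joint-probability computation and therefore costs $\mathcal{O}(n)$ by the previous paragraph, while each $\branr$ node performs a single addition. Bounding the work in every branch by $\mathcal{O}(n)$ and the number of additions by $2^k$ yields the total $\mathcal{O}(n2^k)$, again matching the standard cost of summing the $2^k$ joint probabilities over the assignments of $\vX_{m+1},\ldots,\vX_n$; since these are exactly the stated bounds, the corollary follows.

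The step to watch is confirming that, in a \emph{successful} derivation, no branching occurs beyond the $k$ prescribed instances of $\branr$ and no backtracking occurs — that is what keeps every branch at cost $\mathcal{O}(n)$ and the branch count at $2^k$. This is precisely the content of \Cref{lem:Boolean-consistency} (via \Cref{thm:dagLO}): any attempt to branch on a non-$\lwith$ goal, or to pick the wrong simple method from a table, produces a starved method and hence a failed — and therefore uncounted — branch. I also have to note that the optimizations of \Cref{rem:boolean-consistency} (resolving the method choice statically, and postponing $\branr$ applications) can only cause branches to share a common prefix, hence can only decrease the total number of rule applications, so the bounds $\mathcal{O}(n)$ and $\mathcal{O}(n2^k)$ remain valid — indeed conservative — for the optimized execution as well.
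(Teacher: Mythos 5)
Your proposal is correct and follows exactly the argument the paper intends: the corollary is left unproved in the text precisely because the bounds are meant to be read off the derivations constructed in the proof of Theorem~\ref{thm:BNcomp} --- one deterministic table application per variable (hence $n$ rule instances) for the joint case, and a tree of $2^k$ such branches joined by $\branr$ additions for the marginal case. Your additional care about determinism of the simple-method choice (via Lemma~\ref{lem:Boolean-consistency}) and about the optimizations of Remark~\ref{rem:boolean-consistency} only tightening the count is consistent with the paper and fills in the details it omits.
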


In \Cref{fig:ex2} we provide an alternative example of application of \Cref{thm:BNcomp} with respect to the one in \Cref{fig:computationBN} to further highlight \Cref{rem:mixr}.

%%%%%%%%%%%%%%%%%%%%%%%%%%%%%%%%%%%%%%%%%%%%%%%%%%%%%%%%%%%%%%%%
% FIG example 2
%%%%%%%%%%%%%%%%%%%%%%%%%%%%%%%%%%%%%%%%%%%%%%%%%%%%%%%%%%%%%%%%
\begin{figure}[t]
    $$
    \adjustbox{max width=\textwidth}{$\begin{array}{c}
        \begin{array}{r@{\qquad}c@{\qquad}c@{\qquad}l}
            \probarray{1}{c|c}{
                \color{white} A=\vTrue & \color{white} A=False \\
                \hline
                0.5 & 0.5
            }
        &
            \vbn1{A} 
        &
            \vbn2{B}
        &
            \probarray{2}{c|c}{
                \color{white} B=\vTrue & \color{white} B=False \\
                \hline
                0.3 & 0.7
            }
        \\
            \probarray{3}{c|c|c|c}{
                \color{white} A & \color{white} B & \color{white} C=True & \color{white} C=False \\
                \hline
                True & True & 0.2 & 0.8  \\
                \hline
                True & False & 0.3 & 0.7  \\
                \hline
                False & True & 0.4 & 0.6  \\
                \hline
                False & False & 0.5 & 0.5
            }
            &
            \vbn3{C} 
            &
            \vbn4{D} 
            &
            \probarray{4}{c|c|c|c}{
                \color{white} A & \color{white} B & \color{white} D=True & \color{white} D=False \\
                \hline
                True & True & 1 & 0  \\
                \hline
                True & False & 0.5 & 0.5  \\
                \hline
                False & True & 0.4 & 0.6  \\
                \hline
                False & False & 0.4 & 0.6
            }
        \end{array}
        \multiDedges{bn1,bn2}{bn3,bn4}
        \dDedges{bn1/pb1,bn2/pb2,bn3/pb3,bn4/pb4}
    \\
        \vlderivation{
            \vliin{\branr}{}{
                \state[0.245]{
                    \metof{\vA},
                    \metof{\vB},
                    \metof{\vC},
                    \metof{\vD}
                }{
                    \at\vA, 
                    \at\vB \lwith \af\vB,
                    \af\vC, 
                    \at\vD
                }
            }{
                \vlin{\metof{\vD}}{}{
                    \state[0]{
                        \metof{\vA},
                        \metof{\vB},
                        \metof{\vC},
                        \metof{\vD}
                    }{
                        \at\vA, 
                        \at\vB,
                        \af\vC,
                        \at\vD
                    }
                }{
                    \vlin{\metof{\vC}}{}{
                        \state[0.12]{
                            \metof{\vA},
                            \metof{\vB},
                            \metof{\vC}
                        }{
                            \at\vA, 
                            \at\vB,
                            \af\vC,
                            \at\vA,
                            \at\vB
                        }
                    }{
                        \vlin{\metofx{\vB}}{}{
                            \state[0.15]{
                                \metof{\vA},
                                \metof{\vB}
                            }{
                                \at\vA, 
                                \at\vB,
                                \at\vA, 
                                \at\vB,
                                \at\vA, 
                                \at\vB
                            }
                        }{
                            \vlin{\metof{\vA}}{}{
                                \state[0.5]{
                                    \metof{\vA}
                                }{
                                    \at\vA,
                                    \at\vA, 
                                    \at\vA
                                }
                            }{
                                \vlhy{}
                            }
                        }
                    }
                }
            }{
                \vlin{\metof{\vC}}{}{
                    \state[0.245]{
                        \metof{\vA},
                        \metof{\vB},
                        \metof{\vC},
                        \metof{\vD}
                    }{
                        \at\vA, 
                        \af\vB,
                        \af\vC, 
                        \at\vD
                    }
                }{
                    \vlin{\metof{\vD}}{}{
                        \state[0.35]{
                            \metof{\vA},
                            \metof{\vB},
                            \metof{\vD}
                        }{
                            \at\vA, 
                            \af\vB,
                            \at\vA, 
                            \af\vB,
                            \at\vD
                        }
                    }{
                        \vlin{\metofx{\vA}}{}{
                            \state[0.35]{
                                \metof{\vA},
                                \metof{\vB}
                            }{
                                \at\vA, 
                                \af\vB,
                                \at\vA, 
                                \af\vB,
                                \at\vA, 
                                \af\vB
                            }
                        }{
                            \vlin{\metof{\vB}}{}{
                                \state[0.7]{
                                    \metof{\vB}
                                }{
                                    \af\vB,
                                    \af\vB,
                                    \af\vB
                                }
                            }{
                                \vlhy{}
                            }
                        }
                    }
                }
            }
        }
    \\[-10pt]
    \end{array}$}
    $$
    \caption{
        A Bayesian network together with the computation of the marginal probability  $\prof{\vA\isT,\vC\isF,\vD\isT}$ in \probLO.
    }
    \label{fig:ex2}
\end{figure}
%%%%%%%%%%%%%%%%%%%%%%%%%%%%%%%%%%%%%%%%%%%%%%%%%%%%%%%%%%%%%%%%
%%%%%%%%%%%%%%%%%%%%%%%%%%%%%%%%%%%%%%%%%%%%%%%%%%%%%%%%%%%%%%%%

%%%%%%%%%%%%%%%%%%%%%%%%%%%%%%%%%%%%%%%%%%%%%%%%%%%%%%%%%%%%%%%%
%%%%%%%%%%%%%%%%%%%%%%%%%%%%%%%%%%%%%%%%%%%%%%%%%%%%%%%%%%%%%%%%
%%%%%%%%%%%%%%%%%%%%%%%%%%%%%%%%%%%%%%%%%%%%%%%%%%%%%%%%%%%%%%%%
\section{Conclusion, related and future works}\label{sec:conc}
%%%%%%%%%%%%%%%%%%%%%%%%%%%%%%%%%%%%%%%%%%%%%%%%%%%%%%%%%%%%%%%%
%%%%%%%%%%%%%%%%%%%%%%%%%%%%%%%%%%%%%%%%%%%%%%%%%%%%%%%%%%%%%%%%
%%%%%%%%%%%%%%%%%%%%%%%%%%%%%%%%%%%%%%%%%%%%%%%%%%%%%%%%%%%%%%%%

We presented \probLO, a linear logic programming languages whose multi-head methods are endowed with discrete probability values.
This language adopts an operational and modular interpretation of probabilistic reasoning within the framework of Linear Objects (\LO).
We proved that \probLO is expressive enough to encode Bayesian networks, and that its operational semantics correctly captures both joint and marginal probabilities through proof search.

\textbf{Related works on Bayesian Networks and Linear Logic.} 
Related works on \emph{Bayesian Proof Nets} provide a proof-theoretic foundation for Bayesian inference based on a graphical interpretation of proof nets of linear logic. In these works, the structure of a Bayesian network is reflected directly in the topology of the net, and probabilistic inferences correspond to normalization \cite{faggian-et-alii-2024} or construction \cite{Maieli24} steps in the proof net. 
Our approach differs in both methodology and operational interpretation: \probLO implements Bayesian reasoning through computations using probabilistic \LO-methods, which are interpreted as proof search in a system of derivable rules in $\MALLx$.
Said differently, while Bayesian Proof Nets emphasize a global and graph-based representation of probabilistic dependencies relying on the \emph{correctness criterion} of proof nets, \probLO adopts a modular rule-based perspective in which each node of the network is encoded as a $\MALLx$ bipole (called \emph{Bayesian bipoles} in \cite{Maieli24}).
Besides proof nets, there are also representations of Bayesian networks based on {\em Petri nets}, see e.g. \cite{Pinl2007ProbabilityPropagationNets}. 

\textbf{Related works on Probabilistic Logic Programming.}
The landscape of probabilistic logic programming is dominated by languages such as \ProbLog \cite{DeRaedt2007ProbLog}, \PRISM \cite{SatoKameya1997PRISM,Sato2001PRISM}, \LPAD \cite{Vennekens2004LPAD} and others founded on the distribution semantics. 
These frameworks extend classical logic programming by attaching probabilities to facts or to rule heads, enabling probabilistic inference through weighted model counting or explanation-based inference. 
In \ProbLog and \PRISM, probabilistic choices are represented as sets of annotated alternatives, while \LPAD extends \Prolog with probabilistic disjunctive rule heads. 
In \probLO, the probabilistic aspect is limited to the measure of the reachability of the goal, and the non-deterministic choices made during proof search are controlled by the goal itself.

\textbf{Future work.}
A direction for future work is to investigate how classical optimization techniques for Bayesian inference can be expressed within our framework. 
In particular, we aim to study whether algorithms such as {\em clique tree} (or {\em junction tree})  \cite{LauritzenSpiegelhalter1988,Jensen1996} can be captured by the operational semantics of \probLO, by simple extensions of the language (and the underlying logic $\MALLx$), or encoded as a collection of \probLO methods.
The idea is to exploit resource sensitivity of \probLO and the modular structure of bipoles to model optimized inference procedures directly at the proof-theoretic level.
More generally, we plan to explore the encoding of additional inference strategies, such as {\em variable elimination} \cite{ZhangPoole1994} and {\em belief propagation} \cite{Pearl1988}, to further assess the expressiveness and computational adequacy of \probLO for Bayesian reasoning.

%%%%%%%%%%%%%%%%%%%%%%%%%%%%%%%%%%%%%%%%%%%%%%%%%%%%%%%%%%%%%%%%
%%%%%%%%%%%%%%%%%%%%%%%%%%%%%%%%%%%%%%%%%%%%%%%%%%%%%%%%%%%%%%%%
%%%%%%%%%%%%%%%%%%%%%%%%%%%%%%%%%%%%%%%%%%%%%%%%%%%%%%%%%%%%%%%%
\newpage
%BIBLIOGRAFY
\bibliographystyle{splncs04}
\bibliography{biblio}
%%%%%%%%%%%%%%%%%%%%%%%%%%%%%%%%%%%%%%%%%%%%%%%%%%%%%%%%%%%%%%%%
%%%%%%%%%%%%%%%%%%%%%%%%%%%%%%%%%%%%%%%%%%%%%%%%%%%%%%%%%%%%%%%%
%%%%%%%%%%%%%%%%%%%%%%%%%%%%%%%%%%%%%%%%%%%%%%%%%%%%%%%%%%%%%%%%

%%%%%%%%%%%%%%%%%%%%%%%%%%%%%%%%%%%%%%%%%%%%%%%%%%%%%%%%%%%%%%%%
%%%%%%%%%%%%%%%%%%%%%%%%%%%%%%%%%%%%%%%%%%%%%%%%%%%%%%%%%%%%%%%%
%%%%%%%%%%%%%%%%%%%%%%%%%%%%%%%%%%%%%%%%%%%%%%%%%%%%%%%%%%%%%%%%
%%%%%%%%%%%%%%%%%%%%%%%%%%%%%%%%%%%%%%%%%%%%%%%%%%%%%%%%%%%%%%%%
\end{document}